\pgfplotsset{compat=newest}
\theoremstyle{plain}
\newtheorem{theorem}{Theorem}[section]
\theoremstyle{plain}
\newtheorem{lemma}{Lemma}[section]
\theoremstyle{definition}
\newtheorem{definition}{Definition}[section]
\definecolor{clr1}{RGB}{179, 179, 255}
\definecolor{clr2}{RGB}{77, 77, 255}
\definecolor{clr3}{RGB}{0, 0, 230}
\newcommand{\miracle}{{\sc MiRACLE}}
\newcommand{\subparagraph}{}
\begin{document}
%
\title{YODA: Enabling computationally intensive contracts on blockchains with Byzantine and Selfish nodes}


%
\author{\IEEEauthorblockN{Sourav Das\IEEEauthorrefmark{1},
Vinay Joseph Ribeiro\IEEEauthorrefmark{2} and
Abhijeet Anand\IEEEauthorrefmark{3}}
\IEEEauthorblockA{
Department of Computer Science and Engineering,
Indian Institute of Technology Delhi, India\\
\{\IEEEauthorrefmark{1}souravdas1547, \IEEEauthorrefmark{3}abhijeetanand98765\}@gmail.com, \IEEEauthorrefmark{2}vinay@cse.iitd.ac.in}}

\IEEEoverridecommandlockouts
\makeatletter\def\@IEEEpubidpullup{6.5\baselineskip}\makeatother
\IEEEpubid{\parbox{\columnwidth}{
    Network and Distributed Systems Security (NDSS) Symposium 2019\\
    24-27 February 2019, San Diego, CA, USA\\
    ISBN 1-891562-55-X\\
    https://dx.doi.org/10.14722/ndss.2019.23142\\
    www.ndss-symposium.org
}
\hspace{\columnsep}\makebox[\columnwidth]{}}

\maketitle
\begin{abstract}
One major shortcoming of permissionless blockchains such as
Bitcoin and Ethereum is that they are unsuitable for running
Computationally Intensive smart Contracts (CICs). This prevents such
blockchains from running Machine Learning algorithms, Zero-Knowledge proofs,
etc. which may need non-trivial computation.

In this paper, we present YODA, which is to the best of our knowledge
the first solution for efficient computation of CICs in permissionless
blockchains with guarantees for a threat model with both
Byzantine and selfish nodes. YODA selects one or more 
execution sets (ES) via Sortition to execute a particular CIC off-chain. 
One key innovation is the MultI-Round Adaptive Consensus using 
Likelihood Estimation (\miracle) algorithm based on sequential 
hypothesis testing. \miracle\  allows the execution sets to be small thus 
making YODA efficient while ensuring correct CIC execution with high 
probability. 
It adapts the number of ES sets automatically depending on the
concentration of Byzantine nodes in the system and is optimal in terms
of the expected number of ES sets used in certain scenarios.
Through a suite of economic incentives and technical mechanisms such
as the novel Randomness Inserted Contract Execution (RICE) algorithm,
we force selfish nodes to behave honestly. We also prove that the
honest behavior of selfish nodes is an approximate Nash
Equilibrium. We present the system design and details of YODA and
prove the security properties of \miracle\ and RICE.
Our prototype implementation built on top of Ethereum demonstrates the
ability of YODA to run CICs with orders of magnitude higher gas per
unit time as well as total gas requirements than Ethereum currently
supports. It also demonstrates the low overheads of RICE.

\end{abstract}
\section{Introduction}
\label{sec:introduction}

Permissionless blockchain protocols, which originated with
Bitcoin~\cite{nakamoto2008bitcoin}, allow an arbitrarily large network
of {\em miners} connected via a peer-to-peer overlay network to
agree on the state of a shared ledger. 
More recent blockchains extend the shared ledger concept to
allow programs called {\em smart contracts} to run on them~\cite{buterin2014next,
szabo1996smart}. Smart contracts maintain state 
that can be modified by transactions. One of the major shortcomings 
of these blockchains is that they are unsuitable for smart contracts 
which require non-trivial computation for execution~\cite{
croman2016scaling}. We call such smart contracts {\em Computationally 
Intensive Contracts} (CIC). CICs can potentially run
intensive machine learning algorithms~\cite{wang2014learning},
zero-knowledge proofs~\cite{ben2014succinct,eberhardtzokrates}~etc.

One reason for this shortcoming is that every transaction is executed 
{\em on-chain}, that is by {\em all} miners, and this computation must 
be paid using the transaction fee. Hence CIC transactions require very 
high transaction fees.\footnote{Transaction verification is to some 
extent subsidized by mining fees.} 
A second reason is the Verifier's Dilemma~\cite{luu2015demystifying}. 
A miner must normally start mining a new block on an existing block only 
after verifying all its transactions. If the time taken to verify these 
transactions is non-trivial, it delays the start of 
the mining process thereby reducing the chances of the miner creating 
the next block. Skipping the verification step will save time but at 
the risk of mining on an invalid block, thereby leaving a rational 
miner in a dilemma of whether to verify transactions or not.

One mechanism to side-step the Verifier's Dilemma is to break a 
computationally-heavy transaction into multiple light-weight 
transactions and spread these out over multiple blocks~\cite{luu2015demystifying}.
This mechanism has several shortcomings. 
First, the total fees of these transactions
may be prohibitively high. Second, how to split a single general
purpose transaction into many while ensuring the same resulting ledger
state is not obvious. Third, the
number of blocks over which the light-weight transactions are spread
out grows linearly with the size of the total computation.

Another approach is to execute smart contracts {\em off-chain}, i.e.
by only a subset of {\em nodes}\footnote{For clarity, we  use the 
term {\em node} for an entity performing off-chain computation
and the term miner for an entity performing all on-chain computation.}
to cut down transaction fees and avoid the Verifiers 
Dilemma~\cite{teutsch2017scalable,kalodner2018arbitrum}.
The off-chain methods proposed so far, however, work under the limited
threat model of nodes being rational and honest but not Byzantine. Moreover, they 
require on-chain computation of a part of a CIC to determine its correct 
solution in some cases. 
Note that off-chain CIC computation is not the same as achieving 
consensus about blocks using shards~\cite{luu2016secure,
kokoris2018omniledger,al2017chainspace,gilad2017algorand,zamani2018rapidchain}. 
Blocks can in general take many valid values and are computationally easy to
verify unlike CIC solutions which have only one correct value and are
expensive to verify.





\noindent{\bf Our Goal.} Our goal is to design a mechanism for
off-chain CIC execution with the following properties.
\begin{enumerate}
  \item {\em BAR Threat model.} It should work under a Byzantine,
    Altruistic, Rational (BAR) model which considers both Byzantine and 
    Rational entities. BAR models are more realistic and challenging 
    to analyze than threat models which consider only one of Byzantine 
    or rational entities~\cite{aiyer2005bar}. 
  \item {\em Adaptive to Byzantine fraction.} It should make fewer nodes
    perform off-chain computation if the fraction of Byzantine nodes is 
    smaller.
  \item {\em Scalability}: CICs are never executed or verified on-chain
   either fully or partially. Further, the number of CICs that can be 
   executed in parallel must scale with increasing number of nodes in the 
   system.
  \item{\em Fair and timely reward.} As CIC execution is expensive, all 
  nodes performing off-chain computation correctly must be compensated 
  fairly and in a timely manner.  
\end{enumerate}

\noindent{\bf Our Approach.}
In this paper we present YODA, which is to the best of our knowledge
the first solution for efficient computation of CICs in permissionless
blockchains which gives security guarantees in a BAR threat model. The
threat model allows at most a fraction $f_{max}<0.5$ of Byzantine
nodes in the overall system and the remaining can be {\em
  quasi-honest}. Note that the actual fraction $f$ of Byzantine nodes
is unknown a priori and can be anywhere between $0$ and $f_{max}$.
Although YODA is designed for the worst case ($f=f_{max}$), it
adapts to smaller values of $f$, by evaluating CICs more efficiently.

Quasi-honest nodes are selfish nodes which seek to
maximize their utility by skipping CIC computation using information
about its solutions which may already be published on the blockchain
by other nodes. We call this a {\em free-loading attack}. They may
also try to collude with each other to reduce their computation. More
details about quasi-honest nodes are given in~\textsection
\ref{sub:threat assumption}. YODA is robust to DoS attacks, Sybil
attacks, and ensures timely payouts to all who execute a CIC.

YODA's modus operandi is to make only small sets of randomly selected
nodes called Execution Sets (ES) compute the CICs.  ES nodes submit
their solutions, or just a small digest of them, on the blockchain as
transactions. YODA then study the counts of various solutions
submitted in order to identify the correct solution from among them.
While a small ES improves system efficiency, it can occasionally be
dominated by Byzantine nodes which may form a majority and submit
incorrect solutions. Hence, a simple majority decision does not work
even in a setting with only honest and Byzantine nodes.

To determine the correct CIC solution, YODA uses a novel {\em
  MultI-Round Adaptive Consensus using Likelihood Estimation}
(\miracle) algorithm. In \miracle, miners compute the {\em likelihood}
of each received digest which primarily depends on the counts of 
different digests and the fraction $f$ of Byzantine nodes. 
If the likelihood of any digest crosses a particular threshold, \miracle\ 
declares its corresponding solution as the correct one. 
Otherwise, it iteratively selects additional ES sets until the likelihood
of a digest crosses required threshold. We call the 
selection of each such ES a {\em round}.\footnote{Rounds are 
different from block-generation epochs and are specific to CICs. A 
round may span multiple blocks.}
\miracle\ is adaptive, that is the expected number of rounds
it takes to terminate is smaller the smaller $f$ is.
\miracle\ guarantees selection of the correct digest with probability 
at least $1-\beta$ for a design parameter $\beta$. Moreover, for the 
special case of $f=f_{max}$, if all Byzantine node submit the same 
incorrect digest, \miracle\ optimally minimizes the expected number of 
rounds. Interestingly, the strategy for Byzantine nodes
to make \miracle\ accept an incorrect solution with highest probability
is to submit the same incorrect solution~(refer~\ref{sub:miracle analysis}).


This analysis for \miracle, however, assumes that all quasi-honest nodes
submit correct solution. Since \miracle\ itself does not enforce honest
behaviour, other mechanisms are necessary to make quasi-honest nodes
submit correct solutions. Without additional mechanisms, a
quasi-honest node may be tempted to free-load on solutions already
submitted in earlier rounds, 
thus saving on computational power. In case quasi-honest
nodes free-load on incorrect solutions, \miracle\
has a higher probability of terminating with an incorrect solution.

To mitigate the free-loading attack of quasi-Honest nodes, we design
the Randomness Inserted Contract Execution (RICE), an efficient
procedure to change the digest from one round to the next. We achieve 
this by making the digest dependent on a set of pseudo-randomly chosen
intermediate states of a CIC execution. This ensures, that despite digests 
changing from one round to the next, the miners running \miracle\ 
are able to map digests from different rounds to the same CIC state 
they represent. We prove that 
RICE adds little computational overhead to CIC execution.
To be precise, if $T$ denotes the total computation for a transaction 
execution without RICE, then RICE adds computation overhead of  
$O((log_2 T)^2)$. In the presence of free-loading attacks, we show via 
a game theoretic analysis that honest behavior from all quasi-honest 
nodes is an $\epsilon-$Nash equilibrium with $\epsilon\ge 0$.

We have implemented YODA with \miracle\ and RICE, in Ethereum 
as a proof-of-concept and provide many experimental results supporting 
our theoretical claims.



\section{Theat Model, Assumptions and Challenges}
\label{sec:system model}
In YODA, a blockchain is an append-only distributed ledger consisting
of data elements called blocks. A blockchain starts with a pre-defined
genesis block. Every subsequent block contains a hash pointer to the
previous block resulting in a structure resembling a chain. The
blockchain contains accounts with balances, smart contracts, and
transactions. A transaction is a signed
message broadcast by a account owner which can be included in a block provided
it satisfies certain validity rules. For example, transactions
modifying an account balance must be signed by the corresponding private key
to be valid. YODA assumes that the underlying blockchain provides
guarantees about its Safety and Availability. {\em Safety} means that
all smart contract codes are executed correctly on-chain, and {\em
  availability} means that all transactions sent to the blockchain
get included in it within bounded delay and cannot be removed thereafter.

We refer to any entity performing off-chain CIC execution in 
YODA as a {\em node}. We call the set consisting of all nodes the
{\em Stake Pool} (SP).\footnote{The choice of the name
  will be discussed when we discuss blockchain specifics.}
Without loss of generality each node in SP controls an
account in the ledger with its private key. The account itself is
identified by the public key. 
We assume that the network is synchronous, i.e.,
transactions broadcast by nodes get delivered within a known bounded
delay. However, unlike~\cite{luu2016secure} we do not assume the
existence of an overlay network among nodes. Also, we do not assume
the presence of a secure broadcast channel or a PKI system. We
abstract the source of randomness required for RICE to a function ${\sf
RandomGen()}$ (given in~\textsection \ref{sec:enabling
cic}) which can be accessed by all nodes in YODA. This can be built
as a part of YODA or as an external source using techniques from~\cite{
syta2017scalable,luu2016secure,gilad2017algorand}.

For the rest of the paper, unless otherwise stated, if some event has
{\em negligible} probability, it means it happens with probability at
most $O(1/2^{\lambda})$ for some security parameter
$\lambda$. Any event whose complement occurs with negligible
probability is said to occur with high probability or $w.h.p$.

\subsection{ Threat Model and Assumptions.}
\label{sub:threat assumption}
Systems like permissionless blockchains cannot be assumed to have all
honest nodes. They rely heavily on incentives and the rationality of
nodes in order to work correctly.  Rational nodes are those which seek
to maximize their utility.  However, assuming that all nodes are
rational is not practical either. Real systems may contain Byzantine
nodes, that is those which do not care about their returns.

We consider two kinds of nodes: {\it Byzantine} and {\it
 quasi-Honest}. Byzantine nodes are controlled by an {\it adversary}
and these nodes can deviate arbitrarily from the YODA protocol. 
The adversary can make all Byzantine nodes collude with perfect
clock synchrony. They can add or drop messages arbitrarily and not
execute CICs correctly. We assume that at most $f_{max} < \frac{1}{2}$
fraction of nodes in SP are Byzantine. 
Additionally the adversary has  state 
information of the CIC from all previous rounds and can successfully communicate 
this (potentially false) state information  about previous rounds to 
any node in SP.
However, we assume 
cryptographic primitives are computationally secure.

Modeling rational nodes in these systems, taking into
account all possible means of profits, costs, and attacks is
non-trivial and is beyond the scope of the paper. However to bring 
our model close to reality we work with quasi-honest nodes which
deviate from the protocol in the following manner.

\noindent{\bf Quasi-Honest.}  Quasi-honest nodes will skip execution of a CIC
either completely or partially, for example by not executing some of
its instructions, if and only if the expected reward in doing so is
more than that for executing the transaction faithfully. They do not
share information with any other node if that information
can lead to reduction of their reward. 
They are {\it conservative} when estimating the potential impact of 
Byzantine adversaries in the system, i.e. a quasi-honest node while 
computing its utility assumes that the Byzantine adversary acts 
towards minimizing its (quasi-honest node's) rewards~\cite{aiyer2005bar}.

Quasi-honest nodes may skip computation using one of two methods. The
first is ``free-loading'' where they attempt to guess the correct
state of a CIC after execution of a  transaction from the information of the
corresponding transaction already published on the blockchain. 
Free-loading also includes the case where a quasi-honest node tries to
guess the state when an adversary presents  the
pre-image of hashes among this information already published on the blockchain.

The second is by colluding with other ES nodes of the same round to
submit an identical CIC solution without evaluating it. A
quasi-honest node only colludes with nodes whose membership in the ES
it can verify. YODA has checks (refer~\textsection~\ref{sub:commitment}
) which prevent nodes from directly proving their ES membership. 
Hence nodes must use Zero-Knowledge-Proof techniques
like zk-SNARK~\cite{ben2014succinct} to establish their membership in
ES. YODA allows use of smart-contracts as shown in~\cite{juels2016ring} 
to establish rules of collusion. However we
assume that a quasi-Honest node does not know for sure if the node it
is colluding with is quasi-honest or Byzantine. Additionally, both
free-loading and collusion have costs associated with them.
These cost are due to processing of information available on the blockchain or 
received from peers, producing and verifying zk-Proofs, 
bandwidth and computation costs etc. In case neither
free-loading nor collusion gives a better expected reward than
executing CICs correctly, a quasi-honest node will execute the CIC
correctly.

\subsection{Challenges}
\label{sub:challenges}
Enabling off-chain execution of CICs in the presence of a Byzantine
adversary is fraught with many challenges. Allowing non-Byzantine
nodes to deviate from the protocol makes the problem more interesting 
and even more challenging. Apart from recently studied challenges like
preventing {\it Sybils}~\cite{nakamoto2008bitcoin,zamani2018rapidchain} and
generating an unbiased source of randomness in the distributed 
setting~\cite{gilad2017algorand,luu2016secure,
syta2017scalable}, our system must tackle the following challenges:
\begin{itemize}
	\item to prevent quasi-honest nodes from {\em Free-loading and collusion}.
	\item since the size of any ES is small, an ES becomes vulnerable to 
			{\it Lower cost DoS Attacks} than a DoS attack on the set of all nodes 
			taken together.
	\item to provide guarantees of correctness without requiring re-execution 
			of any part of the CIC on-chain.
\end{itemize}




\section{\miracle: Multi-Round Adaptive Consensus using Likelihood Estimation}
\label{sec:miracle}
In this section we describe \miracle\ as an abstract consensus protocol 
and later get into its blockchain specifics.

\noindent{\bf Problem Definitions.}  Let $\Psi$ be a deterministic
function that when given arbitrary input $x$ produces output $y$. We
denote this as $y \leftarrow \Psi(x)$.\footnote{We use $\leftarrow$ 
for function executions, with the function and its inputs on its 
right and the returned value on its left.} Let SP contain at most
$f_{max}$ fraction of Byzantine nodes. All other nodes are honest,
i.e. they strictly adhere to the protocol. Let $n_i$ be a node in SP
where $i=1,2,\ldots, |SP|$.  Let $\Psi_i$ be the function $n_i$
executes when asked to execute $\Psi$ and let $y_i \leftarrow
\Psi_i(x)$ be the corresponding result. For all honest
nodes, clearly $\Psi_i = \Psi$.

Our goal is to achieve consensus on the true value of $\Psi(x)$ by 
making only one or more small randomly chosen subsets called Execution 
Sets (ES) of nodes evaluate $\Psi(x)$.  Further,
nodes $n_i\ \forall i$ after executing $\Psi_i(x)$ broadcast a 
{\em digest} of $y_i$, say ${\sf hash}(y_i)$ to  all other nodes. 
\miracle\ proceeds in {\em rounds} where in each round a new ES is  
selected. 
We require \miracle\ to correctly reach consensus on $\Psi(x)$ with 
probability greater than $1-\beta$ for any given  user-specified 
parameter $\beta$, given $f_{max}$ and $\mathbb{E}[|ES|]$, while 
minimizing the expected number of rounds to terminate. 
Formally, \miracle\ must guarantee the following properties.
\begin{itemize}
  \item {\bf (Termination)} For any $f_{max} < 1/2$,
    \miracle\ must terminate within a finite number of
    rounds.
  \item {\bf (Agreement)} All honest nodes in SP, agree on the result 
  that \miracle\ returns on terminating.
  \item {\bf (Validity)} \miracle\ must achieve consensus on the true
    value of $\Psi(x)$ with probability $1-\beta$.
  \item {\bf (Efficiency)} When the fraction of Byzantine nodes is
    $f_{max}$ and given a particular $\mathbb{E}[|ES|]$, 
    \miracle\ must terminate in the optimal number of rounds. 
    Further, for any given $f\le f_{max}$, if $N_{f}$ denotes the 
    expected number of nodes performing off-chain execution then 
    $N_{f} \le N_{f_{max}}$  
\end{itemize}

\subsection{Overview and Simplistic algorithms}
\label{sub:simplistic algorithms}
We motivate \miracle\ by describing two simplistic algorithms for
achieving consensus regarding $\Psi(x)$. 
In these two algorithms each node in SP belongs to a particular ES 
with probability $q$ independent of other nodes, thus
$\mathbb{E}[|ES|]=q |SP|$. Note that \miracle\ in general need not
have the same $\mathbb{E}[|ES|]$ in every round, although in this
paper we present a version which does. Studying other possible
 \miracle\
algorithms is part of our future work.

\noindent{\bf Naive Solution 1 (NS1):} Suppose we use a single subset
ES from SP to compute $\Psi(x)$. If more than 50\% of nodes in the ES
publish the same execution result then this is chosen as
$\Psi(x)$. One shortcoming of this scheme is that for lower $\beta$~(more~security), 
the size of ES must be a large fraction of SP. A second shortcoming is
that if the actual fraction of Byzantine nodes $f$ is much smaller
than $f_{max}$ then we end up using an ES much larger than
required. For example, with $\beta = 10^{-20}$ as the error,
starting with $|SP|=1600$ and $f_{max}=0.35$, NS1 will always pick 
$|ES|\approx 900$ independent of $f$.

\noindent{\bf Naive Solution 2 (NS2):} In this solution we relax the
requirement of achieving consensus in one round. If in an ES, the 
fraction of nodes submitting a particular solution exceeds some 
threshold then we terminate with that solution. This threshold should 
be high enough to ensure the correct solution $w.h.p$. In NS1 for 
example, the threshold is $1/2$. In general, the smaller $q$ is the 
larger will the  threshold be. If we do not reach consensus then a new 
round is triggered.

The advantage is that we can use an ES in each round of size smaller
than the ES used in NS1.  In NS2, in certain instances such as when
$f=0$, a single round may still be sufficient to reach consensus.
One shortcoming is that the number of rounds to terminate can
be large because NS2 does not optimally combine the results of all
rounds in order to reach consensus. Results of one round are forgotten
in future rounds.

\subsection{Design and Algorithm}
\label{sub:miracle design}
In \miracle, we employ the multi-round strategy of NS2 to achieve
gains in case $f\ll f_{max}$. In contrast to NS2, each round uses all
hitherto published results to decide whether to terminate or not. For
a given $\Psi(x)$, let $d_1,d_2,...,d_m$ be the $m$ unique digest
values broadcast up to and including the $i^{th}$ round. Let $c_{k,i}$
denote the number of times $d_k$ is repeated in the $i^{th}$ round.
Let $C_i$ denote the total number of submissions (ES nodes) in the
$i^{th}$ round, i.e. $C_i = \sum_{k=1}^{m}{c_{k,i}}$. 

The problem we are addressing is to decide among one of may solutions
broadcast. We present a novel model of this problem as a multiple
hypothesis testing problem where we have one hypothesis for each 
solution submitted and the test must decide which hypothesis is true.

\noindent{\bf Primer on Hypothesis Testing.}
For the reader unfamiliar with Hypothesis testing, we now describe
a standard example.  Consider a communication system in which 
a source is transmitting one symbol selected from a known small master set to
a receiver over a noisy channel. In the simplest case, only two
symbols are allowed, one each for communicating bit 0 and bit 1.
The receiver's task is to decide
which symbol (and hence which corresponding bit(s)) was transmitted given the observation. To solve the
problem, one proposes a hypothesis for each potential symbol which
claims that the corresponding symbol was transmitted. The goal is to
determine which hypothesis is true. To do so, the receiver computes
the probability of the observation conditioned on every hypothesis
being true. Only if one of these probabilities is much larger than the others can
one say with confidence that the corresponding hypothesis is true with
high probability.

One of our novel contributions in \miracle\ is to formulate the
problem of determining the true $\Psi(x)$ as a hypothesis testing
problem. This is not obvious because traditional
hypothesis tests are designed to handle real-world phenomena such as
signals in noise. In our problem we have an intelligent adversary
which is hard to model as there is no restriction on what solution it
can submit. It can submit any of $2^{n}$ digests if the digest is $n$
bits long. Hence unlike the communication problem described above,
there is no small master set of potential correct solutions known a
priori to YODA.

However, 
in the worst case when the fraction of Byzantine nodes is maximum,
i.e. $f=f_{max}$, we do have a probability
distribution on the {\em total number of Byzantine nodes} in an
ES. Similarly we have a probability distribution of the total number
of quasi-honest nodes in an ES. These probability distributions are
sufficient for us to compute a likelihood and perform a hypothesis
test. In the case the adversary submits only a single incorrect
solution, \miracle\ is optimal in the number of rounds it takes to
converge. However, if it submits many solutions then our assumed
distributions for different hypotheses are not exact. Fortunately, if
the adversary submits more than one solution, it is to his own
detriment as \miracle\ will converge to the correct solution with
higher probability than if it submitted only a single solution.

\miracle\  uses multiple parallel Sequential Probability Ratio Tests (SPRT) 
to choose the correct solution~\cite{wald1973sequential} whose details are given next.

\noindent{\bf \miracle\ as Parallel SPRT:} We model the problem as $m$ simultaneous two-hypotheses Sequential
Probability Ratio Tests (SPRT)~\cite{wald1973sequential}. The $k^{\rm
  th}$ SPRT is given by:
\begin{itemize}
  \item Null Hypothesis, $\mathcal{H}_k: d_k$ is the solution.
  \item Alternative Hypothesis, $ \mathcal{H}^{*}_k: d_k $ is not the solution.
\end{itemize}

The log-likelihood ratio is defined as the log of the ratio of
probabilities of the observations ($c_{k,i}$) conditioned on the two
hypotheses.  We approximate this log-likelihood ratio after $i$ rounds
by a quantity we loosely call the {\em likelihood}. We  denote it by
$L_{k,i}$, and proceed as follows. We give a formula for the
likelihood subsequently. For appropriately chosen threshold
$\mathbb{T}$, in round $i$ we perform
\begin{itemize}
  \item If $L_{k,i}\le \mathbb{T}$, make no decision.
  \item If $L_{k,i}> \mathbb{T}$, decide in favor of $\mathcal{H}_k$.
\end{itemize}

When any one SPRT, say the $k^{\rm th}$, terminates in favor of its
Null Hypothesis $\mathcal{H}_k$, we halt all other SPRTs and declare
$d_k$ as the digest. If no SPRT terminates, we proceed to the next
round.
Algorithm~\ref{algo:miracle}
demonstrates the general \miracle\ algorithm for any given $L_{k,i}$
and $\mathbb{T}$. 

\begin{algorithm}[h!]
\caption{\miracle}\label{algo:miracle}
\begin{algorithmic}[1]
      \State $i\gets 1$
      \While{$L_{k,i} \leq \mathbb{T} ~\forall k$}
          \State $i \leftarrow i+1$
        \State Pick next ES to execute $\Psi(x)$
      \EndWhile
      \State \textbf{declare} $d_{k'}$ to be correct where $L_{k',i}>\mathbb{T}$ 
  \end{algorithmic}
\end{algorithm}

\noindent{\bf \miracle\ and YODA.}
We now present our specific choices for the likelihood $L_{k,i}$ and threshold
$T$ which we use in YODA. \miracle\ can in general use other choices
for the same quantities, and such generalisations are part of future work.

Recall that nodes are selected randomly and with the same probability 
$q$ for any ES. We set
\begin{equation}
    L_{k,i} = \sum_{j=1}^{i}\left(c_{k,j}^2 - (C_j - c_{k,j})^2\right) = \sum_{j=1}^{i} \left((2c_{k,j} - C_j)C_j\right)
\end{equation}
and the threshold to:
\begin{equation}
\label{equ:T}
    \mathbb{T} = \left(\ln\frac{1-\beta}{\beta}\right)\frac{2q(1-q)M(1-f_{max})f_{max}}{(1-f_{max})-f_{max}}.
\end{equation}
The above choice of $L_{k,i}$ is indeed the log-likelihood ratio when the
adversary submits a single incorrect solution in all rounds, assuming
a Gaussian distribution for the number of quasi-honest and
Byzantine nodes in any ES. Under the same assumptions, in 
Section~\ref{sub:miracle analysis} we describe how this choice 
of threshold gives an optimal solution in terms of number
of rounds to terminate along with required security.

\section{RICE: Randomness Inserted Contract Execution}
\label{sec:rice}
\miracle\ by itself does not force quasi-honest nodes to behave
honestly. In fact, a {\em free-loading attack} by quasi-honest node
is a real possibility. Here quasi-honest nodes in an ES of one round
may simply replay the digest with highest likelihood of previous
submissions in earlier rounds. Even though this enables a 
quasi-honest node to save on heavy CIC computation, this attack can
make increase the probability
of accepting an incorrect solution to larger than $\beta$. Specifically, in 
scenarios where the corresponding digest is an incorrect solution, as 
an adversary can sometime dominate a large fraction in ES, free-loading 
can lead to acceptance of an incorrect solution. 

To address this in this section we describe Randomness Inserted Contract
Execution (RICE), a procedure to pseudo-randomly change the digest of
$\Psi(x)$ from one round to the next to mitigate the free-loading
problem. Other attacks, such as collusion of quasi-honest nodes within
the same ES and copying digests submitted by nodes in the same round
are addressed in~\textsection\ref{sec:enabling cic}.

So far we have looked at $\Psi$
as a very abstract function without describing any of its details.
We now formally define the semantics of $\Psi$ required
to understand RICE.


\subsection{Design of RICE}
\label{sub:rice design}
\noindent{\bf Setup.} 
We assume $\Psi$ 
to be a stateful function similar to a smart contract. 
Let $\sigma$ be the state on which $\Psi$
operates by taking an input $x$. The output of $\Psi(\sigma, x)$ is
the modified state $\sigma^*$. Call $root(\sigma)$ (or simply $root$)
a unique digest of $\sigma$. For example, $root$ can represent the
root of a Merkle tree where leaves of the tree correspond to the
contents of $\sigma$.

Let $j$ ($\ge 1$) be the \miracle\ round number. We wish to generate a pseudorandom 
digest in each round. At the same time, to compute likelihoods, we
must be able to map digests across different rounds to each other. 
To solve the paradox, we generate a digest $(seed^{(j,.)},root)$, 
where $seed^{(j,.)}$ is a pseudorandom number which changes from 
one round to the next. The $seed$ is initialized as:
\begin{equation}
  seed^{(j,0)} \leftarrow
  \begin{cases}
  {\sf RandomGen}() & \text{if } j=1 \\
  {\sf hash}(seed^{(j-1, 0)}) & \text{otherwise}
  \end{cases}
\end{equation}

\noindent{\bf Array Model for RICE.}  Consider an execution
model in which all machine level instructions that $\Psi(\sigma,x)$
executes are stored in an imaginary ``instruction array'', that is the
$i^{\rm th}$ instruction executed is stored in the $i^{\rm th}$ array
position. RICE then interrupts execution\footnote{Blockchains such as
  Ethereum count gas used after each instruction. Hence additional
  interrupts are not required for Ethereum-like blockchains.} of
$\Psi(\sigma,x)$ at certain intermediate indices of the array where
state of the CIC is $\sigma'$ and
updates the $seed$ as follows:
\begin{equation}
  seed^{(j,l+1)} \leftarrow {\sf hash}(seed^{(j,l)}||root(\sigma'))
\label{eq:seedupdate}
\end{equation}
By choosing these different indices pseudorandomly in different
rounds, RICE produces a different $digest$ every round. ES nodes then
submits $(seed^{(j,\phi)},root(\sigma^*))$ as the $digest$ 
after executing $\Psi(\sigma,x)$, where $\phi$ is the total number of
times the $seed$ has been updated.

Due to the deterministic nature of the $\Psi$, all nodes computing
$\Psi(\sigma,x)$ correctly will have the same $root$ across
rounds. The $seed$ values of all honest nodes will be identical within
any round, but will differ from one round to the next. Malicious nodes
may submit the correct $root$ but the wrong $digest$, an attack we
guard against in \textsection \ref{sub:reward}



\noindent{\bf Details.}
Let $t$ denote the indices in the instruction array where $t \in
[1:T]$ where $T$ is the total number of instructions executed as a
part of $\Psi(\sigma, x)$. 
Note that $T$ is unknown a priori, but assumed to be
bounded. Specifically, for systems such as Ethereum where CIC
transactions have a gas limit, $T$ is guaranteed to be bounded
(refer~\textsection \ref{sub:transaction agreement}). 
Thus to update $digest$,
instead of executing the entire $\Psi$ array in a single run, RICE
progressively executes a subarray of $\Psi$ array between two index
$t_i$ (initial) and $t_f$ (final), updates the $seed$, and repeats the
process with the next sub-array and so on until it reaches $T$.

Formally, let $\Psi[t_i:t_f]$ denote an arbitrary subarray from $\Psi$
with $t_i$ and $t_f$ its initial and final index. RICE consists of a
new deterministic contract execution function $\Psi'$ with the
following semantics.  Inputs to $\Psi'$ are two indices $t_i,t_f$, an
intermediate CIC state $\sigma'$ and $x$.  Given input $(t_i, t_f,
\sigma', x)$, $\Psi'$ executes subarray $\Psi[t_i:t_f]$ (both
$t_i,t_f$ inclusive) with state $\sigma'$ and data $x$. After
execution, $\Psi'$ returns a modified state and the last
successfully executed index.  In the special case where $T < t_f$ for
some $(t_i,t_f)$, $\Psi'(t_i, t_f, \sigma', x)$ runs only till
$\Psi[t_i:T]$ and returns $\sigma^*, T$ as its output. Formally,
\begin{equation*}
\begin{rcases}
  (\sigma', t_f) \text{, if } t_f < T\\
  (\sigma^*, T)\text{, if } t_f \ge T \\
\end{rcases}
\leftarrow \Psi'(t_i, t_f, \sigma', x) \text{, where }
\sigma^* = \Psi(\sigma,x)
\end{equation*}
After executing $(l+1)^{\rm th}$ subarray of $\Psi(\sigma, x)$, RICE
updates the seed via (\ref{eq:seedupdate}).

Algorithm~\ref{algo:rice} gives the pseducode of RICE.
\begin{algorithm}
  \caption{RICE}\label{algo:rice}
  \begin{algorithmic}[1]
    \State \textbf{input} $seed^{(j,0)}, \sigma, x$
      \State $\sigma' \gets \sigma, l \gets 0$
      \State $t_i, t_f \gets $ \Call{NEXT}{} \Comment{Next subarray indices}
      \While{true}
      \State $\sigma', t_l \leftarrow \Psi'(\sigma', t_i, t_f, x)$
      \If{$t_l$ is $T$}{}
        \State {\bf return} $(seed^{(j,l)},root(\sigma'))$
      \EndIf
      \State $seed^{(j,l+1)} \gets {\sf hash}(seed^{(j,l)}||root(\sigma'))$
      \State $l \gets l+1$
      \State $t_i, t_f \gets$ \Call{NEXT}{}
      \EndWhile
  \end{algorithmic}
\end{algorithm}

\subsection{Choosing the indices.}
\label{sec:choosing indices}
A naive strategy is to choose indices $t_f$ as multiples of a fixed
number, say $\Delta$. Note that $\Delta$ cannot be a function of $T$
which is not known prior to computing $\Psi$. This strategy leads to
overheads of $O(T)$.

Another problem arises because indices do not change from one round 
to the next. Suppose a quasi-honest node of the current wants to 
free-load the $root$ values at these indices from an earlier round. 
It can ask any node from an earlier ES to provide these root values 
and use (\ref{eq:seedupdate}) to verify that they indeed 
corresponded to the $digest$ submitted by that node, thereby giving it 
confidence that these root values are correct. It can then reuse these 
root values to create its own digest without performing
the computation. In case the ES node queried is malicious, the 
quasi-honest node will submit an incorrect solution.


A second naive strategy is to choose the sub-array sizes randomly but
with mean size exponentially increasing as $\Psi$ progresses. For
example, choose $t_f - t_i$ randomly from $[2^k:2^{k+1}]$ where $k$
increments by 1 from one sub-array to the next. On the positive side,
this will lead to $O(log_2T)$ seed updates (and consequently overheads
of that order) and also will produce a different set of indices from
one round to the next with large probability.

However, there remains the problem of skipping computing the last
sub-array of the instruction array.  Suppose a quasi-honest node in the current
round's ES has learned the value of $T$ from ES nodes of earlier
rounds. It can perform computation of $\Psi$ for all sub-arrays
except for the last one. Then it can use a value of $root$ submitted
in an earlier round in (\ref{eq:seedupdate}) to obtain the final
$seed$ value, without computing the last sub-array.
For this strategy the last sub-array can be as large as $T/2$,
leading to
nodes skipping as much as half of the computation. Hence although
overheads have reduced to $O(log_2T)$, the computation skipped at the
end is $O(T)$. We seek to find a sweet spot between the two with our
choice of indices for RICE.


\begin{algorithm}
  \caption{Next subarray indices}\label{algo:next indices}
  \begin{algorithmic}[1]
    \State {\bf Let} $K = [1,2,2,3,3,3,4,4,4,4,\ldots]$
    \Procedure{NEXT}{$seed, \sigma, index, t_f$}
    \State $k \gets K[index]$
    \State $pivot \gets t_f-{\sf int}(seed[1:k])+2^k$
    \State $nextSeed \gets {\sf hash}(seed||root(\sigma'))$ 
    \State $t_i \gets t_f+1$
    \State $t_f \gets pivot + {\sf int}(nextSeed[1:k])$
    \State {\bf return} $t_i, t_f$
    \EndProcedure
  \end{algorithmic}
\end{algorithm}

\noindent{\bf Our Approach.} RICE uses a hybrid of the two index
locating procedures described above. The idea is to divide the array
$\Psi'$ into sub-arrays of size $2^k$ where
$k=1,2,2,3,3,3,4,4,4,4,5,\ldots$. In other words, every value of $k$
repeats $k$ times.
Thus, like the second naive scheme the sub-array size increase,
but much more gradually (sub-exponentially) so that the last sub-array which might be
skipped is smaller.  More precisely, for a sub-array of size $2^k$ we
choose the index to update the $seed$ as ${\sf int}(seed[1:k])$ away
from the beginning of the sub-array, where ${\sf int}(seed[1:k])$
denotes the integer whose binary representation is identical to the
first $k$ bits of $seed$. Algorithm~\ref{algo:next indices} demonstrates
our approach.

\section{CIC Preliminaries}
\label{sec:cic prelims}

\noindent{\bf Smart Contracts and its Execution.} 
A smart contract in YODA is denoted by its
state $\sigma=(cid, code, storage)$. Here $cid$ denotes its immutable
globally unique cryptographic identity, and $code$ represents its
immutable program logic consisting of functions. The state can be
modified by a transaction invoking its code and its execution can only
begin at a function. Functions may accept data from sources external
to the blockchain which must be included in the transactions invoking
them. In YODA smart contracts are stateful and state is maintained as
$(key,value)$ pairs which together we refer to as $storage$.

A transaction $\tau$ in YODA is the tuple $(tid, funId, data,
\xi)$. Here $tid$ is a globally unique transaction identity and
$funId$ is the function it invokes. All external inputs required
for the function are part of $data$ and $\xi$ consists of
meta-information about the account that generated the transactions
along with a cryptographic proof of its authenticity. Hereafter we
assume all transactions are validated using $\xi$ before being
included in a block and hence we drop $\xi$. 

Executions of functions in YODA are modeled as {\it transaction driven
state transitions}. We use $\Psi$ to denote a {\it Deterministic
State Transition Machine } (Possibly Turing complete). Formally,
we denote this as $\sigma^* \leftarrow \Psi(\sigma, \tau)$ where
$\sigma^*$ is the state of the contract after executing $\Psi$.

\noindent{\bf Intensive Transactions.}  Intensive Transactions (IT)
are transactions which cannot be executed on-chain due to either of
two problems: its execution time exceeds the typical interspacing
between blocks, or it competes with PoW time (the Verifier's Dilemma).
The first problem can occur in permissioned ledgers such as
Hyperledger, Quorum, R3-Corda etc., and both problems in
permissionless blockchains such as Ethereum and Bitcoin. The exact
definition of an IT will depend on parameters of the blockchain system
under consideration. Transactions which are not ITs are called {\it
  non-ITs}.

We give one example of a IT for Ethereum using the concept of $gas$,
a measure of cost of program execution~\cite{buterin2014next}. 
Ethereum associates a fixed cost with each
machine level instruction that a smart contract executes and enforces
the constraint that all transactions included in a block can consume a
maximum combined gas of {\it blockGasLimit} which is set to prevent
the Verifier's Dilemma.  Every time a transaction $\tau$ is broadcast, 
its creator specifies $\tau.gasLimit$, an upper bound on the gas it is expected to
consume. Clearly $\tau.gasLimit<blockGasLimit$ for any transaction to
be included in a block. Transactions which violate this condition are
thus ITs. 

\noindent{\bf Computationally Intensive Contracts.}  We term all smart
contracts that execute ITs as Computationally Intensive
Contracts. Since YODA allows transactions of different CICs as well as
on-chain transactions to run in parallel we make some assumptions as
well as provide special mechanisms to prevent race conditions from
occurring. First, we assume that transactions of CIC that are run off-chain cannot
modify the storage of any other smart contract whether CIC or
non-CIC. Second, we assume that on-chain transactions cannot modify
the state of any CIC if any off-chain transaction execution of the CIC
is in progress. Third, we ensure that all transactions of all CICs are
ordered on-chain before their execution by a special on-chain smart
contract called the {\em Master Contract}\footnote{Systems can be
built where all rules in MC are part of the basic System protocol
instead of making it a smart contract. Our implementation makes MC a
smart contract.} (MC).

\noindent{\bf Master Contract.} The MC maintains a queue $Q_\sigma$ in
which all transactions (ITs and non-ITs) of CIC $\sigma$ are stored
before being executed.  The transaction at the queue's head is
executed first and a new incoming transactions added to its tail. In
case a non-IT is at the head of $Q_{\sigma}$, miners execute the
transaction on-chain otherwise the transaction is executed
off-chain.

The MC performs many tasks in YODA.
Since each CIC has its own queue, ITs of different CICs can be
executed in parallel off-chain. In addition, the MC embodies the rules
of YODA like creating CICs, ordering their transactions and initiating
off-chain execution, running \miracle, distributing rewards to
the ES node, enabling YODA nodes to join SP by collecting their
deposits etc.

\noindent{\bf Stake Pool.} YODA prevents sybil entries in
SP~\cite{douceur2002sybil}. To join SP, a node needs to deposit stake
$\mathcal{D}_{sp}$ by sending a transaction to MC. This deposit acts
as insurance for misbehavior of SP nodes. 
The SP once selected remains valid for a system defined
interval of time denoted by $\delta_{sp}$ beyond which YODA
re-initiates the SP selection procedure.  Note that SP could
potentially include all miners in the entire network, especially in a
small blockchain network maintained by several hundred nodes such as
blockchains that are built using
Hyperledger~\cite{androulaki2018hyperledger}.


\noindent{\bf Execution Set.}  YODA selects a subset of nodes from 
SP known as the Execution Set (ES) to execute $\Psi(\sigma, \tau)$.

\noindent{\bf CIC creation and deployment.}  To deploy a CIC with
state $\sigma$ on the blockchain, anyone can broadcast a transaction
requesting creation of CIC containing the tuple $(code,storage)$. 
Miners use ${\sf RandomGen()}$ to generate a
unique identity $cid$ for the CIC. An entry is registered in the 
Master Contract (MC) corresponding to the new CIC along with an empty 
transaction queue $Q_\sigma$. Then miners deploy $\sigma$ on
the blockchain like any other smart-contract.

\noindent{\bf External Functions.} The following functions are used in 
YODA and are assumed to be accessible to all nodes. 
\begin{itemize}
  \item $(o,\pi)\leftarrow {\sf CheckSort}
  (pk,sk,nonce,threshold)$. This function on invocation internally runs
  Secret Cryptographic Sortition (SCS)~\cite{gilad2017algorand}. The
  $threshold$ is used to set the probability $q$ that an SP node is in
  ES. If ${\sf CheckSort()}$ returns $\perp$ it implies the node
  was not selected. Otherwise the node is selected and $o$ is
  indistinguishable from a truly random number to anyone without 
  $sk$~\cite{micali1999verifiable}. However, it is easy to prove that 
  ${\sf CheckSort()} \neq\perp$, given $\pi$ and $pk$.

  \item $r \leftarrow {\sf RandomGen()}$ on invocation produces an
  unbiased distributed random string. It can be practically built
  using the Randhound protocol given in \cite{syta2017scalable} or
  using block headers of a sufficiently long set of blocks. A simple 
  and efficient block nonce generation procedure is using $k$ 
  historical block hashes for sufficiently large $k$ as 
  in~\cite{gilad2017algorand,luu2016secure}. Alternatively, one can
  use a NIST randomness beacon relayed through a data feeding 
  mechanism such as Town~Crier~\cite{zhang2016town}.
\end{itemize}

\section{Enabling CICs in Blockchain}
\label{sec:enabling cic}
In this section we describe how on receiving an IT $\tau$, YODA
executes it off-chain.  We have described two of YODA's key
ingredients in detail: \miracle, which enables efficient CIC
computation with small sets of nodes, and RICE which makes guessing
the seed of one round difficult from submitted digests in earlier
rounds. The other mechanisms we describe here address the other
challenges mentioned in~\textsection \ref{sub:challenges}, such as
preventing sybil attacks, collusion, DoS, and certain variants of
free-loading attacks.

\noindent{\bf S1. CIC Transaction Deployment.}
\label{sub:transaction agreement}
On receiving an IT, $\tau$, miners generate string $nonce$ using 
${\sf RandomGen()}$ which is used in ${\sf CheckSort()}$ to elect an
ES. It is important that the $nonce$ be created {\em during or only 
after} inclusion of $\tau$ on-chain. Otherwise, if the $nonce$ is
known a priori, the node generating $\tau$ can perform the following
attack to dominate the ES formed. It can enroll with key-pairs
$(pk,sk)$ in SP such that the Sortition Check~\textsection
\ref{sub:sortition check} results for the key-pairs will guarantee it
a large membership in ES. It then broadcasts the transaction,
dominates the resulting ES, and submits false solutions which may be
accepted.

The creator of $\tau$ deposits $(\mathcal{D}_{min}+gasPrice*gasLimit)$ 
in the MC where $\mathcal{D}_{min}$ is the minimum amount to pay for 
the fixed costs of S4 and S5, and $gasPrice$ and $gasLimit$ denote the 
gas price and gas limit respectively specified by $\tau$. Any extra 
deposit after execution of $\tau$ is refunded.

\noindent{\bf S2. Sortition Check and RICE of CICs.}
\label{sub:sortition check}
Nodes decide if they are in ES corresponding to $\tau$ by computing 
(\ref{eq:sortition check}) given below.  The node is part of the ES if and 
only if its $sort\_res\neq \perp$. All nodes selected for the ES then 
execute the corresponding CIC in RICE as given in~\textsection 
\ref{sec:rice}, generate the corresponding RICE digest and proceed to 
S3.
\begin{equation}
sort\_res \leftarrow {\sf CheckSort}((pk,sk),\tau.nonce, threshold)
\label{eq:sortition check}
\end{equation}

By using SCS, YODA prevents Byzantine nodes from joining an ES at will.
Also SCS protects ES nodes from DoS attacks
since their selection is secret until they reveal the fact. Because
YODA uses a commit-reveal mechanism (see below), ES nodes are not
vulnerable to DoS attacks before they submit their commit
transaction. After the commit step, ES nodes may be easier to identify
and hence the DoS attack can be more effective. In the second 
``reveal'' step, an ES node broadcasts a single on-chain transaction 
after a certain number of blocks have been generated.  We assume that 
a node is sufficiently DoS resilient to be able to receive block 
headers in a timely manner and also to broadcast a small transaction.


\noindent{\bf S3. Commitment and Release.}
\label{sub:commitment}
Let $digest_k$, and $sort\_res_k$ denote the digest of
$\Psi(\sigma,\tau)$, and the result of ${\sf CheckSort()}$ respectively
for node $n_k \in ES$ for $k=1,2,\ldots,|ES|$. The commitment $n_k$
generates is $se_k$ and is given by
\[se_k \leftarrow {\sf hash}(digest_k||sort\_res_k)\]
Assuming existence of a VRF and Ideal Hashing, $w.h.p$ 
$sort\_res_k \neq sort\_res_i$ for $i\neq k$ and hence 
$se_i \neq se_k $ even if $digest_k = digest_i$. Nodes in ES then 
broadcast $se_k$ to the blockchain as a transaction which miners 
include on-chain if the node is in SP.

ES nodes must broadcast their commitment within a time window $w_{src}$
or commitment period starting from the block that includes
$\tau$. Window $w_{src}$ is transaction dependent, recorded in MC and
measured in number of blocks.  It is set based on the gas limit
mentioned in $\tau$ as by design $\Psi(\sigma, \tau)$ will run for at
most $\tau.gasLimit$ instructions.

A node is required to keep $sort\_res_k$ secret during this period and 
forfeits its deposit if it fails to do so. This deters ES nodes from 
colluding. However, as mentioned in \textsection \ref{sub:threat 
assumption}, ES nodes can use of ZK-proofs to prove that they are in an 
ES without revealing $sort\_res_k$. We perform a game theoretic 
analysis of such an attack in \textsection \ref{sub:collusion}.

After $w_{src}$, nodes in ES wait for a buffer period $w_{buf}$ before
sending their unhashed digests and sortition results to the
blockchain. Nodes in ES which have submitted commitments earlier, are
required to submit a transaction containing $(digest_k||sort\_res_k)$
within a time window $w_{sr}$ or release period and failure to do so
results in their forfeiting deposits and being removed from SP.

The reason for keeping this buffer period of length $w_{buf}$ is to
prevent an adversary from launching a DoS attack which we term the
{\em Chain Forking Attack} (CFA).\footnote{According to the blockchain
  safety and
  liveness assumptions of our threat
model, CFA will never arise and we 
can safely consider $w_{buf}=0$. However for blockchains reminiscent 
to Ethereum, CFA is a practical concern, thus we have added a non-zero 
$w_{buf}$ in our implementation~\textsection\ref{sec:evaluation}.} 
CFA can occur in blockchains where block creation does not guarantee 
block finalization and nodes need to wait for certain number of blocks 
before becoming certain about a block's finality $w.h.p$. Assume the 
absence of this buffer period. If this buffer did not exist then if an 
honest node publishes its opened commitment after $w_{src}$ and 
expects its inclusion in a future block, an adversary can create an 
alternate chain where it includes this transaction before the end of 
$w_{src}$ and can thereby penalizes the honest node. To prevent this 
from happening, the introduction of $w_{buf}$ between $w_{src}$ and 
$w_{sr}$ ensures that the attacker will have to create a fork which is 
long enough to be prohibitively expensive to create.

\noindent{\bf S4. \miracle\ for CICs.}
\label{sub:miracle for cic}
The blockchain miners then execute one round of \miracle\ using the 
submitted digests. All digests with the same $root$ are considered by 
\miracle\ to be the same solution irrespective of which
$seed$ they contain. Steps S2-S4 are repeated if necessary till 
\miracle\ converges. 

\noindent{\bf S5. State Update, Reward Distribution and Cleanup.}
\label{sub:reward}
Once \miracle\ terminates, all nodes in the ES broadcast one or more 
transactions to the blockchain miners containing information required for 
updating the state of the CIC to $\sigma^*$, the state corresponding 
to the winning digest along with corresponding proof.\footnote{For a 
Merkle tree implementation of $\sigma$ it will suffice to send only 
the Merkle paths of all modified variables in $\sigma$.} 
Any miner, on receiving such a transaction validates it using the root 
contained in the winning $digest$ in \miracle\ and then gossips it to
other miners. To avoid flooding the network, it does not gossip any
other transactions
about the same state.


To disincentivize nodes from submitting either false $seed$ or $root$
values, YODA rewards ES nodes as follows.  Let $I$ denote the round
in which \miracle\ terminates with $root$.  The deposits of all ES
nodes who submitted a digest with different root from the winning one
are forfeited.  For a round $i|i\le I$, let $seed_k| k \in
\{1,2,\ldots K\}$ be the $K$ different $seed$ values submitted in
digests containing $root$ and let $e_k$ be their count. YODA then
rewards only the ES nodes corresponding to the $seed_i$ for which
$\frac{e_i}{\sum_{j=1}^{K}e_j} > th_1$ where $th_1 > 0.50$. YODA
confiscates the deposit of all ES nodes for which
$\frac{e_i}{\sum_{j=1}^{K}e_j} < th_2$ and YODA neither rewards nor
punishes the rest. These forfeited deposits are either burned or
transfered to the MC.

The intuition behind using thresholds is as follows. Although \miracle\
identifies the correct root with probability $1-\beta$, it cannot say which of many digests
(with different $seed$ values), both containing the correct root in a
particular round, is correct. Rewarding both would encourage
free-loading. A naive solution would be to reward the set of nodes
corresponding to $\max_k \{e_k\}$ and punish the rest. There are, 
however, rare instances in which Byzantine nodes can exceed 
quasi-honest nodes in a round. 
If Byzantine nodes publish the correct $root$ but with an
incorrect seed, the naive method would severely punish the honest
nodes. The set of quasi-honest nodes will however not be a very small
fraction of an ES. Hence threshold $th_2$ is chosen small enough to
ensure that quasi-honest nodes which behave honestly will not be
punished $w.h.p$, while at the same time punishing lone quasi-nodes
which try to guess the correct seed. Quasi-nodes have to collude in
large numbers to cross the $th_1$ threshold, an attack which is
non-trivial and analyzed in \textsection \ref{sub:collusion}

Lastly, blockchain miners perform a cleanup, where they deallocate
space used for execution of $\Psi(\sigma,\tau)$. This includes the
space for storing commitments, sortition results etc. Following this
miners check whether the transaction queue $Q_\sigma$ is empty or
not. If it is non-empty, SP nodes to initiates the protocol for 
off-chain execution of the transaction at the head of the queue and 
the cycle continues. 

\section{Security Analysis}
\label{sec:analysis}
In this section we analyze the security properties of YODA. We first
analyze \miracle\ and prove many results, the most important being that
it is optimal in the expected number of rounds under certain constraints. 
Incidentally, given \miracle, Byzantine nodes maximize the probability 
of choosing an incorrect solution by all submitting the same incorrect 
solution. Ironically, \miracle\ is optimal given this particular 
strategy of Byzantine nodes.

We then analyze RICE, proving bounds on the number of update indices,
and the amount of computation that can be skipped at the end. We also
prove that w.h.p. every round will have update indices which have not
been encountered in previous rounds. This makes free-loading
difficult.

We then present a Game-theoretic analysis of our incentive schemes
proving them to have Nash Equilibria \cite{nash1950equilibrium}. We
finally stitch together all our results to show how they meet the
goals mentioned in \textsection \ref{sec:introduction}. Lastly
we discuss why guarantees in YODA are likely to work in an even more
realistic setting with a stronger adversary and where quasi-honest 
nodes are  allowed more protocol deviations.

\subsection{\miracle\ Analysis}
\label{sub:miracle analysis}
In this section we present the security analysis and guarantees 
provided by \miracle. Let $M$ be the total size of SP, i.e. $M=|SP|$ 
containing $f$ fraction of Byzantine nodes. Let the probability of any 
node in SP getting chosen for an ES be $q$. Let $N^b, N^h$ denote the 
total number of Byzantine and quasi-honest (here assumed to be honest) 
nodes in an ES. Let  $N^b_i, N^h_i$ denote $i.i.d$ Bernoulli random 
variable indicating if the $i^{\rm th}$ Byzantine or honest node is 
selected for the ES or not. Then
\[ N^b = \sum_{i=1}^{fM} N^b_i \text{, and }
N^h = \sum_{i=1}^{(1-f)M} N^h_i \]
We approximate $N^h, N^b$ by a Gaussian distribution, since they 
are a sum of large number of $i.i.d$ random variables. Let 
$\mu_h, \mu_b$ denote the mean of $N^h,N^b$ respectively and $\nu^2_b,\nu^2_h$ 
denote their variances. These are
$\mu_h = E[N^h] = q(1-f)M$, and  $\nu^2_h = Var[N^h] = q(1-f)M(1-q)$.
Similarly, $\mu_b = E[N^b] = qfM$, and $\nu^2_b = Var[N^b] = qfM(1-q)$.

\begin{theorem}
If $f=f_{max}$ and the adversary submits only a single incorrect 
digest, then \miracle\ reduces to an optimal Sequential Probability 
Ratio Test (SPRT)~\cite{wald1973sequential}.
The threshold (see (\ref{equ:T})) provides an optimal expected number of rounds for a given $\beta$. The expected number of rounds is given by
\begin{equation}
    \mathbb{E}[\text{\# of rounds}] \approx \frac{(1-\beta)\ln\frac{1-\beta}{\beta}+\beta\ln\frac{\beta}{1-\beta}}{\frac{(\mu_h - \mu_b)^2 + \nu^2_h - \nu^2_b}{2\nu^2_b} + \ln\frac{\nu_b}{\nu_h}}
    \label{eq:expectedround}
\end{equation}
\end{theorem}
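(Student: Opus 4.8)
The plan is to recognize the final statement as an instance of Wald's classical theory of the Sequential Probability Ratio Test and to verify that \miracle's decision rule, with the specific $L_{k,i}$ and $\mathbb{T}$ of Section~\ref{sub:miracle design}, is exactly a Wald SPRT whose error probability is $\beta$. First I would fix the two-digest worst case: when $f=f_{max}$ and the adversary submits a single incorrect digest, each round produces counts $c_{k,j}$ and $C_j-c_{k,j}$ for the correct and incorrect digests. Modeling each digest's count as Gaussian, under $\mathcal{H}_k$ we take $c_{k,j}\sim\mathcal{N}(\mu_h,\nu_h^2)$ and under $\mathcal{H}^*_k$ we take $c_{k,j}\sim\mathcal{N}(\mu_b,\nu_b^2)$. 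Because a fresh ES is drawn every round, the per-round observations, and hence the log-likelihood increments, are i.i.d.\ across rounds, which is the structural prerequisite for invoking Wald's results.

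Next I would compute the per-round log-likelihood ratio and show it is an affine function of $c_{k,j}^2$. Substituting the Bernoulli-sum relations $\nu_h^2=\mu_h(1-q)$ and $\nu_b^2=\mu_b(1-q)$, the quadratic and linear coefficients collapse so that the increment equals $\kappa\,c_{k,j}^2$ plus a constant, with $\kappa=\frac{\mu_h-\mu_b}{2(1-q)\mu_h\mu_b}>0$; differencing the cumulative statistics of the two competing digests cancels the constants and yields exactly $\kappa\,L_{k,i}$. Thus the rule ``declare $d_k$ when $L_{k,i}>\mathbb{T}$'' is a monotone relabelling of a cumulative LLR crossing a fixed boundary, i.e.\ a genuine SPRT. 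Matching this boundary to Wald's optimal value $\kappa\mathbb{T}=\ln\frac{1-\beta}{\beta}$ reproduces exactly the $\mathbb{T}$ in (\ref{equ:T}); since the two digests' SPRTs are mirror images, the probability of accepting the incorrect digest is $\beta$, and the Wald--Wolfowitz optimality of the SPRT then gives the claimed minimality in the expected number of rounds.

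For the expected-rounds formula I would apply Wald's identity $\mathbb{E}[\text{\# rounds}]=\mathbb{E}[\Lambda_N]/\mathbb{E}[\lambda_j]$, where $\lambda_j$ is the per-round log-likelihood increment of the correct digest's test. Ignoring boundary overshoot, the test terminates at $+\ln\frac{1-\beta}{\beta}$ with probability $1-\beta$ and at $\ln\frac{\beta}{1-\beta}$ with probability $\beta$, so the numerator $\mathbb{E}[\Lambda_N]=(1-\beta)\ln\frac{1-\beta}{\beta}+\beta\ln\frac{\beta}{1-\beta}$. The per-round drift under the true hypothesis is precisely the Gaussian Kullback--Leibler divergence $D\!\left(\mathcal{N}(\mu_h,\nu_h^2)\,\|\,\mathcal{N}(\mu_b,\nu_b^2)\right)=\frac{(\mu_h-\mu_b)^2+\nu_h^2-\nu_b^2}{2\nu_b^2}+\ln\frac{\nu_b}{\nu_h}$, which is exactly the denominator. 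Dividing the two gives (\ref{eq:expectedround}).

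The main obstacle is controlling the three approximations hidden in the ``$\approx$'': the Gaussian replacement of the discrete (binomial) counts $N^h,N^b$, the neglected overshoot past the SPRT boundaries in both Wald's error identity and the expected-sample-size identity, and the fact that the clean quadratic form of $L_{k,i}$ relies on the exact proportionality $\nu^2=\mu(1-q)$. I would also need to argue that the single-incorrect-digest strategy is what makes the assumed per-hypothesis distributions exact, so that Wald--Wolfowitz optimality applies verbatim; any richer adversarial strategy only enlarges the separation between the hypotheses and is therefore handled by the surrounding discussion rather than inside this optimality statement.
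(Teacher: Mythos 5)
Your proposal is correct and follows essentially the same route as the paper: model the two-digest worst case as a Gaussian two-hypothesis SPRT, expand the quadratic log-likelihood ratio and use $\nu_h^2=\mu_h(1-q)$, $\nu_b^2=\mu_b(1-q)$ to collapse it to a constant times $\sum_j\bigl(c_{1,j}^2-c_{2,j}^2\bigr)$, match Wald's boundary $\ln\frac{1-\beta}{\beta}$ to recover $\mathbb{T}$ in (\ref{equ:T}), and invoke Wald--Wolfowitz optimality. You in fact go further than the paper's own proof by explicitly deriving the expected-rounds formula (\ref{eq:expectedround}) from Wald's identity with the Kullback--Leibler drift and by cataloguing the Gaussian and overshoot approximations hidden in the ``$\approx$'', which the paper states but does not derive.
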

\begin{proof}
Consider the case where all Byzantine nodes consistently provide the same solution. Let the solutions be $d_1$ and $d_2$. The problem of determining the correct solution boils down to choosing between two hypotheses over multiple rounds:
$\mathcal{H}_k: d_k$ is the correct solution; $k=1,2$. Let $c_{k,j}$ denote the number of solutions equal to $d_k$ in round $j$.
Then the optimal solution \cite{wald1973sequential} is given by an SPRT in which  the log-likelihood ratio after $i$ rounds is
\[L_i = \sum^i_{j=1} -\frac{(c_{1,j} - \mu_h)^2}{2\nu^2_h}- \frac{(c_{2,j} - \mu_b)^2}{2\nu^2_b} + \]\[\frac{(c_{1,j} - \mu_b)^2}{2\nu^2_b} + \frac{(c_{2,j} - \mu_h)^2}{2\nu^2_h} \]
\[
= \frac{1}{2q(1-q)M}\left[\frac{(1-f_{max})-f_{max}}{(1-f_{max})f_{max}} \right] \sum^i_{j=1}(c^2_{1,j}-c^2_{2,j})
\]
If $L_i > \ln((1-\beta)/\beta)$, 
then the SPRT chooses $\mathcal{H}_i$. This is equivalent to \miracle.
\end{proof}

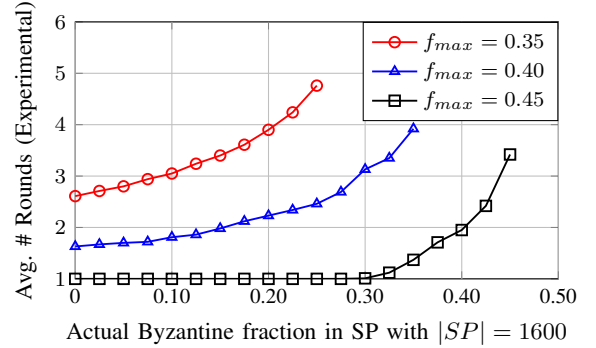
\begin{figure}[h!]
    \centering
    \pgfplotsset{footnotesize,height=5cm, width=0.9\linewidth}
    \begin{tikzpicture}
    \begin{axis}[
    legend pos=north west,
    xmin=0.0,
    xmax=0.5,
    ymin=1,
    ymax=6,
    xlabel=Actual Byzantine fraction in SP with {$|SP|=1600$},
    ylabel= Avg. \# Rounds (Experimental),
    grid=major,
    legend style={at={(1,1)},anchor=north east}, 
    xtick = \empty,
    extra x ticks={0,0.10,...,0.55},
    extra x tick labels = {0,0.10,0.20,0.30, 0.40, 0.50},
    ]    
    \addplot [line width=0.25mm, red, mark=o] coordinates {(0.0, 2.61) (0.025, 2.71) (0.050, 2.80) (0.075, 2.94) (0.10, 3.05) (0.125, 3.24) (0.150,3.40) (0.175,3.61) (0.20, 3.90) (0.225, 4.24) (0.250, 4.76)};

    \addplot [line width=0.25mm, blue, mark=triangle] coordinates {(0.0, 1.63) (0.025, 1.67) (0.050, 1.7) (0.075, 1.72) (0.10, 1.81 ) (0.125, 1.86) (0.150, 1.98) (0.175,2.12)
    (0.20,2.23) (0.225,2.34) (0.250,2.46) (0.275,2.69) (0.30,3.13) (0.325, 3.35) (0.350, 3.92)};

    \addplot [line width=0.25mm, black, mark=square] coordinates {(0.0, 1) (0.025,1) (0.050,1) (0.075,1) (0.10,1 ) (0.125,1) (0.150,1) (0.175,1) (0.20,1) (0.225,1) (0.250,1) (0.275,1) (0.30,1.01) (0.325,1.12) (0.350,1.37) (0.375,1.71) (0.40,1.95) (0.425, 2.42) (0.450,3.42)};
    \addlegendentry{$f_{max}=0.35$}
    \addlegendentry{$f_{max}=0.40$}
    \addlegendentry{$f_{max}=0.45$}
    \end{axis}
    \end{tikzpicture}
    \caption{Number of rounds \miracle\ takes to terminate with $f\le f_{max}$ when
    designed for worst case $(f=f_{max}$) expected number of rounds for termination 
    to be $5$ i.e $R(f_{max})=5$. For the case $f_{max}=0.45$, for all $f\le 0.30$,
    \miracle\ terminates in one round.}
    \label{fig:miracle adaptive}
\end{figure}

\noindent{\bf Remark 1:} \miracle\ is optimal in case $f=f_{max}$. 
In case $f<f_{max}$, the expected number of rounds will be less than 
that specified in~(~\ref{eq:expectedround}), while still ensuring that 
the probability of incorrectly deciding $\sigma^*$ is less than $\beta$.
Specifically, let $R(f)$ be the expected number of rounds \miracle\ takes 
to terminate as a function of $f$. In Figure~\ref{fig:miracle adaptive} ,
for $\beta=10^{-20}$,
we choose $q$ such that, $R(f_{max})=5$. We then empirically
evaluate the number of rounds \miracle\ takes to terminate when 
$f \le f_{max}$ over 10000 runs of \miracle. Key points to observe in the 
Figure~\ref{fig:miracle adaptive} is that \miracle\ terminates early
for $f \le f_{max}$. Specifically, with $f_{max}=0.45$, on average
\miracle\ terminates in one round up to $f\approx 0.30$.

\noindent{\bf Remark 2:} The probability of choosing a node to belong
to the ES, $q$, can be set to any value which fixes the expected size
of ES in any round. We recommend that $q$ be chosen such that if all
nodes in the first ES are honest then the likelihood crosses the
threshold $T$ in that round itself. This ensures that more than one
round can occur only if there are some Byzantine nodes in SP. Solid
line from graph in Figure~\ref{fig:miracle qpickup} demonstrates
minimum required $|ES|$ with $|SP|$=1600 and $f=0$ for one round
termination of \miracle.

\begin{figure}[h!]
    \centering
    \pgfplotsset{footnotesize,height=5cm, width=0.9\linewidth}
    \begin{tikzpicture}
    \begin{axis}[
    legend pos=north west,
    xmin=0.0,
    xmax=0.45,
    ymin=0,
    ymax=1200,
    xlabel={$f_{max}, |SP|=1600$},
    ylabel= {$\mathbb{E}[|ES|]$},
    grid=major,
    xtick = \empty,
    extra x ticks={0,0.10,...,0.55},
    extra x tick labels = {0,0.10,0.20,0.30, 0.40, 0.50},
    ]    
    




    \addplot [line width=0.25mm, black, mark=square] coordinates {( 0.05 , 4.85 )( 0.10 , 10.29 )( 0.15 , 16.60 )( 0.20 , 24.19 )( 0.25 , 33.81 )( 0.30 , 46.94 )( 0.35 , 66.92 )( 0.40 , 103.38 )( 0.45 , 199.53 )};

    \addplot [line width=0.25mm,dashed,  mark options=solid, black, mark=square] coordinates {( 0.05 , 5.38 )( 0.10 , 12.85 )( 0.15 , 23.61 )( 0.20 , 39.91 )( 0.25 , 66.22 )( 0.30 , 112.39 )( 0.35 , 203.24 )( 0.40 , 410.75 )( 0.45 , 940.13 )};


    \addplot[line width=0.75, dotted, red, mark options=solid, mark=*] coordinates {((0.05,49) (0.10,81) (0.15,123) (0.20,187) (0.25,291) (0.30,483) (0.35,897) (0.40, 1600) (0.45, 1600)};

    \begin{tiny}
    \addlegendentry{$f=0$}
    \addlegendentry{$f=f_{max}$}
    \addlegendentry{NS1}
    \end{tiny}
    \end{axis}
    \end{tikzpicture}
    \caption{Dashed line shows $\mathbb{E}[|ES|]$ required for $R(f_{max})=1$. 
        Solid line shows $\mathbb{E}[|ES|]$ when $R(0)=1$. For the case 
        $f_{max}=0.35$, $|ES|\approx 200$ guarantees $R(f_{max})=1$. In addition 
        with $\mathbb{E}[|ES|] \approx 60$, for the same $f_{max}$, \miracle\ 
        terminates in $1$ round when $f=0$ and $\beta=10^{-20}$.}
    \label{fig:miracle qpickup}
\end{figure}
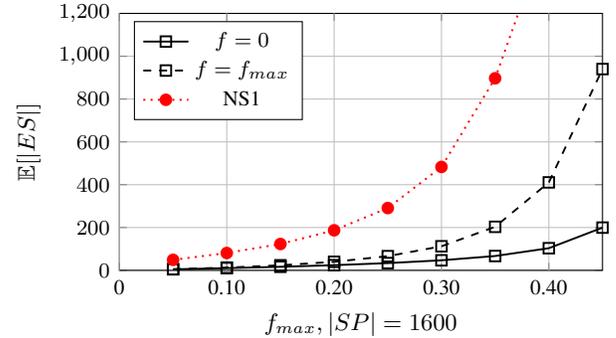

Although we have proved that \miracle\ is optimal if an adversary chooses a single solution, the question arises as to whether the adversary has a better strategy for \miracle\ in which it chooses more than one solution. The next theorem states that this is not the case. Indeed, the best strategy for the adversary, given that YODA uses \miracle\ is to choose only a single incorrect solution.

\begin{theorem}
With \miracle\ as the consensus algorithm, the best strategy for an adversary controlling all byzantine nodes is to only submit a single incorrect solution. Any other strategy reduces the probability of choosing incorrect solution by the system.
\end{theorem}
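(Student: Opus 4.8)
The plan is to prove this via a sample-path domination (coupling) argument: I will show that, round by round, concentrating all Byzantine votes on a single incorrect digest used in every round makes the likelihood of that digest pointwise dominate the largest likelihood achievable by any competing incorrect digest under a split strategy, while leaving the likelihood of the correct digest completely unchanged. Since \miracle\ accepts whichever digest first crosses $\mathbb{T}$, advancing (never delaying) the round at which an incorrect likelihood crosses $\mathbb{T}$ can only increase the probability of an incorrect decision, so concentration maximizes the adversary's success probability.

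First I would fix the realization of the sortition across all potential rounds, writing $h_j$ and $b_j$ for the numbers of honest and Byzantine nodes in the ES of round $j$; these counts are shared by every adversarial strategy, which decouples node selection from the adversary's labeling decisions. All honest nodes submit the correct digest $d_1$, so $c_{1,j}=h_j$. Any Byzantine vote placed on $d_1$ or withheld only raises $c_{1,j}$ (aiding the correct digest) and removes mass from the incorrect digests, so without loss of generality I assume every Byzantine node submits an incorrect digest, giving $C_j=h_j+b_j$ on every sample path irrespective of the allocation. The immediate payoff is that the correct-digest likelihood $L_{1,i}=\sum_{j=1}^{i}(2h_j-C_j)C_j$ depends only on the sortition, so its entire trajectory is identical under every strategy and need not be tracked when comparing strategies.

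The crux, and the step I expect to be the main obstacle, is the cumulative domination inequality, which must hold for the multi-round likelihoods and for an adversary allowed to choose its allocation $\{c_{k,j}\}_{k\ge2}$ adaptively from the observed history. Writing each incorrect likelihood as
\[
L_{k,i}=\sum_{j=1}^{i}(2c_{k,j}-C_j)C_j=-\sum_{j=1}^{i}C_j^{2}+2\sum_{j=1}^{i}c_{k,j}C_j ,
\]
I note that the round constraint $\sum_{k\ge2}c_{k,j}=b_j$ forces $\sum_{k\ge2}\sum_{j\le i}c_{k,j}C_j=\sum_{j\le i}b_jC_j$, a quantity fixed by the sortition alone. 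Since every term $c_{k,j}C_j$ is nonnegative, the maximum over incorrect digests of $\sum_{j\le i}c_{k,j}C_j$ is at most this common total, with equality precisely when all Byzantine mass sits on one digest reused in every round; hence
\[
\max_{k\ge2}L_{k,i}\ \le\ \sum_{j=1}^{i}(2b_j-C_j)C_j ,
\]
and the right-hand side is exactly the incorrect likelihood produced by the single-incorrect-solution strategy. The argument survives adaptivity because the bound invokes only $\sum_{k\ge2}c_{k,j}\le b_j$ on each fixed realization and never references how the adversary arrived at its allocation.

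Finally I would convert the likelihood domination into a domination of the failure event. Let $T^{\mathrm{cor}}$ be the first round with $L_{1,i}>\mathbb{T}$, identical under all strategies, and for a strategy $S$ let $T^{\mathrm{inc}}_S$ be the first round in which some incorrect likelihood exceeds $\mathbb{T}$. The inequality above gives $T^{\mathrm{inc}}_{\mathrm{single}}\le T^{\mathrm{inc}}_S$ on every sample path, so whenever a split strategy $S$ causes \miracle\ to accept an incorrect digest (that is, $T^{\mathrm{inc}}_S<T^{\mathrm{cor}}$), the single-incorrect strategy satisfies $T^{\mathrm{inc}}_{\mathrm{single}}\le T^{\mathrm{inc}}_S<T^{\mathrm{cor}}$ and also accepts incorrectly. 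Taking probabilities over the sortition then shows the probability of incorrect acceptance under the single-incorrect strategy is at least that under any other strategy, with strict inequality whenever $S$ strictly lowers $\max_{k\ge2}L_{k,i}$ on a set of sortition outcomes of positive measure, which is the claim. The only delicate point is ties, namely rounds where correct and incorrect likelihoods cross $\mathbb{T}$ simultaneously; I would adopt the convention that a simultaneous crossing is not counted as an incorrect acceptance, which keeps the implication in the adversary-pessimistic direction and preserves the inequality.
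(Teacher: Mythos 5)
Your proposal is correct and takes essentially the same approach as the paper: both arguments couple the strategies on a common sortition realization, observe that the per-round count of any single incorrect digest under a split strategy is dominated by the count under the concentrated strategy, and conclude $\max_{k\ge 2}L_{k,i}\le L'_{2,i}$ pointwise, hence the concentrated strategy maximizes the probability of an incorrect acceptance. You simply spell out the steps the paper leaves implicit (invariance of the correct digest's likelihood, adaptivity, the passage from likelihood domination to stopping-time and probability domination), which is a welcome tightening but not a different route.
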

\begin{proof}
Consider a strategy in which adversary submits solution $d_2,d_1,...,d_m$ for $m\geq2$ and the solution submitted by honest nodes is $d_1$.
Let us call this strategy  ST1. Consider another strategy ST2, in which the adversary only submits a single solution denoted by $s'_2$ and the honest nodes submits $s'_1$.
For an round $i$ in ST2, let the corresponding likelihood be $L'_{2,i}$ and the number of solutions submitted is $c'_{2,i}$.
We assume that total number of submissions for an round $j\leq i$ are the same in both ST1 and ST2.
Hence, it is trivial to see that $c_{k,j} \leq c'_{2,j} \forall \ k \geq 2 $. Also,
\[ L_{k,i} = \sum^i_{j=1}(2c_{k,j}-C_j)C_j
\leq \sum^{i}_{j=1}(2c'_{k,j} - C_j)C_j\]
\[= \sum^{i}_{j=1}(2c'_{k,j} - C'_j)C'_j \]
\[ = L'_{2,i}, \forall\ k \geq 2, \forall i\]
Thus
\begin{equation}
    \max_{k\geq2} L_{k,i} \leq L'_{2,i}
\end{equation}
Hence the result follows.
\end{proof}

For each solution submitted \miracle\ has one likelihood which is compared with $\mathbb{T}$. The question arises as to whether or not more than one likelihood can simultaneously exceed the threshold, thus leading to multiple solutions for the transaction. The following theorem proves that this cannot happen.
\begin{theorem}
\miracle\ can terminate with only a single solution being adjudged correct.
\end{theorem}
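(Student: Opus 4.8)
The plan is to show that no two distinct likelihoods can exceed the threshold $\mathbb{T}$ in the same round; single-solution termination then follows immediately, since MiRACLE halts the instant some $L_{k,i} > \mathbb{T}$. Two facts drive the argument. First, the threshold is strictly positive: since $\beta$ is a (small) error probability with $\beta < 1/2$ we have $\ln\frac{1-\beta}{\beta} > 0$, and since $f_{max} < 1/2$ the factor $(1-f_{max})-f_{max} = 1-2f_{max} > 0$, while $q(1-q)M(1-f_{max})f_{max} > 0$; hence $\mathbb{T} > 0$ by (\ref{equ:T}). Second, in any round $j$ the counts of distinct digests partition the $C_j$ submissions, so for any two distinct indices $k_1 \neq k_2$ one has $c_{k_1,j} + c_{k_2,j} \leq C_j$.

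First I would form the sum of two likelihoods using the identity $L_{k,i} = \sum_{j=1}^{i}(2c_{k,j}-C_j)C_j$:
\begin{align*}
L_{k_1,i} + L_{k_2,i} &= \sum_{j=1}^{i} C_j\bigl(2c_{k_1,j} + 2c_{k_2,j} - 2C_j\bigr) \\
&= 2\sum_{j=1}^{i} C_j\bigl(c_{k_1,j} + c_{k_2,j} - C_j\bigr).
\end{align*}
By the partition bound each summand satisfies $c_{k_1,j} + c_{k_2,j} - C_j \leq 0$, and $C_j \geq 0$, so every term is non-positive and therefore $L_{k_1,i} + L_{k_2,i} \leq 0$.

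The conclusion is then a one-line contradiction. Suppose that at some round $i$ two distinct digests $d_{k_1}$ and $d_{k_2}$ both had likelihood exceeding the threshold, i.e. $L_{k_1,i} > \mathbb{T}$ and $L_{k_2,i} > \mathbb{T}$. Adding these gives $L_{k_1,i} + L_{k_2,i} > 2\mathbb{T} > 0$, contradicting the bound just derived. Hence at most one likelihood can exceed $\mathbb{T}$ in any round, so when MiRACLE terminates (line 6 of Algorithm~\ref{algo:miracle}) the declared index $k'$ with $L_{k',i} > \mathbb{T}$ is unique and a single solution is adjudged correct.

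I do not expect a genuine obstacle here; the only point needing care is confirming $\mathbb{T} > 0$, which relies on both $\beta < 1/2$ and $f_{max} < 1/2$. The pairwise-sum trick is what makes the multiplicity argument clean: attempting to bound each $L_{k,i}$ individually would be harder, since a single likelihood can certainly be large, whereas the constraint that all counts share the fixed per-round budget $C_j$ forces any two of them to trade off against each other.
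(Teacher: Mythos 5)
Your proof is correct and follows essentially the same route as the paper's: both sum the likelihoods of the allegedly winning digests, use the fact that the counts $c_{k,j}$ partition the per-round budget $C_j$ to show that sum is non-positive, and contradict $\mathbb{T}>0$. The only cosmetic differences are that you sum pairwise while the paper sums over all $t\ge 2$ offending hypotheses, and that you explicitly justify $\mathbb{T}>0$ (from $\beta<1/2$ and $f_{max}<1/2$), which the paper asserts without comment.
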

\begin{proof}
Assume there are multiple solutions $d_1,d_2,..,d_t$ with $t\geq2$ for which $\sum_j{(2c_{k,j}-C_j)C_j} > \mathbb{T} > 0$. Then summing them
    \[ \sum^{t}_{k=1} \sum_j{(2c_{k,j}-C_j)C_j} > 0 \]
    \begin{equation}
        \Rightarrow \sum_k\sum_j {2c_{k,j}C_j} > t\sum_j{C^2_j}.
        \label{eq:zpac1}
    \end{equation}
    Consider the left hand side of the previous equation.
    \begin{equation}
        2\sum_jC_j\sum_kc_{k,j} \leq 2\sum_jC_j(C_j) \leq t\sum_jC^2_j
        \label{eq:zpac2}
    \end{equation}
    Equation \ref{eq:zpac1} and \ref{eq:zpac2}  contradict each other. Hence our assumption was wrong.
\end{proof}

\subsection{RICE Analysis}
\label{sub:rice analysis}
In this section we prove that RICE adds low overhead and is secure. 

\begin{lemma}
\label{lemma:rice1}
Given RICE terminates in an subarray of size $2^k$, let $\phi$ be the number of times $\Psi(\sigma,\tau)$ is interrupted to update the $seed$ in RICE, then 
\begin{equation}
\label{eq:rice1}
\frac{(k-1)k}{2} < \phi \le \frac{k(k+1)}{2}
\end{equation}
\end{lemma}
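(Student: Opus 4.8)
The plan is to reduce the statement to an elementary counting argument driven entirely by the index sequence $K=[1,2,2,3,3,3,4,4,4,4,\ldots]$ of Algorithm~\ref{algo:next indices}, since every seed interruption in Algorithm~\ref{algo:rice} is in one-to-one correspondence with the processing of one subarray of $\Psi[1:T]$. The first step is to make this correspondence precise: each pass through the while-loop of Algorithm~\ref{algo:rice} consumes exactly one subarray and performs exactly one seed update, so $\phi$ equals the number of subarrays into which RICE partitions $\Psi[1:T]$ up to the one in which the final index $T$ lies.

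The central structural fact is that in $K$ the exponent value $j$ appears exactly $j$ consecutive times, so that the subarrays produced have sizes $2^1,\,2^2,2^2,\,2^3,2^3,2^3,\,\ldots$. I would exploit this triangular pattern to count: the number of subarrays whose size-exponent is strictly smaller than $k$ is $\sum_{j=1}^{k-1} j = \frac{(k-1)k}{2}$, while the number whose exponent is at most $k$ is $\sum_{j=1}^{k} j = \frac{k(k+1)}{2}$. These are the two bracketing quantities appearing in~(\ref{eq:rice1}).

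I would then invoke the hypothesis that RICE terminates inside a subarray of size $2^k$, meaning the index $T$ falls within one of the $k$ subarrays carrying exponent $k$. This forces RICE to have already processed, in full, every subarray of exponent $1,\ldots,k-1$, and then to consume some number $r$ of the exponent-$k$ subarrays with $1 \le r \le k$. Hence $\phi = \frac{(k-1)k}{2} + r$, and letting $r$ range over its extremes gives $\frac{(k-1)k}{2} < \phi \le \frac{(k-1)k}{2}+k = \frac{k(k+1)}{2}$, which is precisely~(\ref{eq:rice1}).

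The one place I would proceed with care — and which I expect to be the only real obstacle — is the off-by-one bookkeeping at termination: namely, whether the subarray actually containing $T$ is counted among the seed interruptions. This choice is exactly what fixes the strictness of the lower bound (forcing $r \ge 1$ rather than $r \ge 0$) and the sharpness of the upper bound at $r=k$. I would settle it directly from the control flow of Algorithm~\ref{algo:rice}, checking that arriving at $T$ within an exponent-$k$ subarray guarantees at least one such subarray has been traversed while never permitting more than the $k$ available, so that the interval $\left(\frac{(k-1)k}{2},\,\frac{k(k+1)}{2}\right]$ is exactly the range of $\phi$.
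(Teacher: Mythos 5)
Your argument is exactly the paper's: the paper's proof sketch likewise observes that the slow-increase strategy makes exponent $j$ appear $j$ times, so that $\sum_{j=1}^{k-1} j < \phi \le \sum_{j=1}^{k} j$, which is your triangular-number bracketing with $1\le r\le k$ exponent-$k$ subarrays consumed. Your version is merely more explicit about the one-subarray-per-update correspondence and the terminal off-by-one, which the paper glosses over.
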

\begin{proofsketch}
Due to the slow $k$ increase strategy, the total number of times storage root is updated i.e $\phi$ is 
    \[ \sum_{j=1}^{k-1}j < \phi \leq \sum_{j=1}^{k}j \]
  which proves the lemma.
\end{proofsketch}



Given the bounds on number of times $seed$ is updated in RICE, we now proceed to find a relationship between the $T$ and $\phi$. Thus we first find relationship between $T$ and $k$ and then proceed further. 

\begin{lemma}
\label{lemma:rice2}
The relationship between $k$ and $T$ is
\begin{equation} 
2^{k}(k - 2)+2 < T \leq 2^{k+1}(k-1)+2
\end{equation}
\label{aplemma:maxkrelation}
\end{lemma}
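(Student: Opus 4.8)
The plan is to translate the termination condition ``RICE terminates in a sub-array of size $2^k$'' into two inequalities bounding the total computation $T$ by the cumulative work done up to and just after the sub-arrays of exponent $k$, and then to evaluate those cumulative sums in closed form.

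First I would recall the index schedule from \textsection \ref{sec:choosing indices}: the sub-array sizes follow the sequence $2^1, 2^2, 2^2, 2^3, 2^3, 2^3, \ldots$, i.e. a block of size $2^j$ appears exactly $j$ times for every $j \ge 1$ (this is the same ``slow $k$ increase'' used in Lemma \ref{lemma:rice1}). Consequently, after exhausting all sub-arrays with exponent at most $m$, the number of instructions covered is
\[
S_m = \sum_{j=1}^{m} j\, 2^j .
\]
Next I would put this sum in closed form. Using the standard identity $\sum_{j=1}^{m} j\,2^j = (m-1)2^{m+1}+2$ (a one-line induction, or differentiation of the geometric series evaluated at $x=2$), I obtain
\[
S_{k-1} = 2^{k}(k-2)+2, \qquad S_k = 2^{k+1}(k-1)+2 .
\]

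The key observation is then that RICE terminates inside a sub-array of size $2^k$ precisely when $T$ lies in the half-open interval $(S_{k-1}, S_k]$: if $T \le S_{k-1}$ the execution would already have reached index $T$ within a sub-array of exponent at most $k-1$, whereas if $T > S_k$ the exponent-$k$ blocks are exhausted without reaching $T$, forcing a strictly larger sub-array. Substituting the closed forms for $S_{k-1}$ and $S_k$ directly yields the claimed bounds $2^{k}(k-2)+2 < T \le 2^{k+1}(k-1)+2$.

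The only mildly delicate point is the bookkeeping at the two endpoints: I must confirm that the left inequality is strict (at $T = S_{k-1}$ the last instruction falls at the end of the final exponent-$(k-1)$ block, so termination occurs there, not in an exponent-$k$ block) while the right inequality is non-strict (at $T = S_k$ the last instruction is the final index of the last exponent-$k$ block, which still counts as terminating in size $2^k$). Everything else is the routine evaluation of $S_m$, which I expect to cite as a single identity rather than derive in full.
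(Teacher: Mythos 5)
Your proposal is correct and follows essentially the same route as the paper's proof sketch, which likewise starts from $\sum_{j=1}^{k-1} j2^j < T \le \sum_{j=1}^{k} j2^j$ and simplifies using the closed form of the sum. You simply make explicit the identity $\sum_{j=1}^{m} j2^j = (m-1)2^{m+1}+2$ and the endpoint conventions that the paper leaves implicit.
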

\begin{proofsketch} From the slow $k$ increase strategy we have
\begin{equation} \sum_{j=1}^{k-1}{j2^j} < T \leq \sum_{j=1}^{k}j2^j
\label{eq:ric1}
\end{equation}
Simplifying further we have the result.
\end{proofsketch}

\begin{theorem} {\bf (RICE Efficiency)} The number of times $seed$ is updated in a single RICE run i.e $\phi$ is $\Theta((\log_2T)^2)$.
\end{theorem}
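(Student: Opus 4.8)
The plan is to chain together the two preceding lemmas: Lemma~\ref{lemma:rice1} already pins $\phi$ to the terminating sub-array exponent $k$ up to constants, and Lemma~\ref{lemma:rice2} relates $k$ to $T$. The only real work is to invert the latter relationship to show that $k = \Theta(\log_2 T)$ despite the polynomial factor $k$ multiplying the exponential $2^k$, after which substitution finishes the argument.

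First I would read off from Lemma~\ref{lemma:rice1} that $\phi = \Theta(k^2)$: both the lower bound $(k-1)k/2$ and the upper bound $k(k+1)/2$ are quadratics in $k$ with leading coefficient $1/2$, so $\phi$ is squeezed between two functions that are each $\Theta(k^2)$. Hence it suffices to prove $k = \Theta(\log_2 T)$ and then substitute into these bounds.

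Next I would establish $k = \Theta(\log_2 T)$ from Lemma~\ref{lemma:rice2}. For the upper bound on $k$, the lower estimate $T > 2^{k}(k-2)+2$ gives $T > 2^k$ for $k \ge 3$ (since then $k-2 \ge 1$), whence $k < \log_2 T$. For the lower bound on $k$, I would absorb the polynomial factor into the exponent using the crude inequality $k-1 \le 2^{k}$ (valid for all $k\ge 1$): then the upper estimate $T \le 2^{k+1}(k-1)+2$ yields $T \le 2^{2k+1} + 2$, so $k \ge (\log_2(T-2) - 1)/2 = \Omega(\log_2 T)$. Combining the two gives $k = \Theta(\log_2 T)$.

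Finally I would combine the pieces: $\phi = \Theta(k^2) = \Theta((\log_2 T)^2)$, which is the claim. The main obstacle is the middle step, namely inverting $T \le 2^{k+1}(k-1)+2$, where naively taking logarithms leaves an awkward additive $\log_2 k$ correction term that one must argue is of lower order; the clean fix is the substitution $k-1 \le 2^{k}$, which only weakens the constant in front of $k$ (by a factor of two) and is therefore harmless for a $\Theta$ statement.
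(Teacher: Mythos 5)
Your proposal is correct and follows essentially the same route as the paper's proof sketch: invert Lemma~\ref{lemma:rice2} to get $k=\Theta(\log_2 T)$ and substitute into the $\phi=\Theta(k^2)$ bounds of Lemma~\ref{lemma:rice1}. You simply make explicit the inversion step (handling the polynomial factor via $k-1\le 2^k$) that the paper compresses into "taking log on both sides."
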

\begin{proofsketch}
Taking log on both sides of the result of Lemma \ref{aplemma:maxkrelation} we have  
$k$ is  $\Theta(\log_2T)$.
    This combined with equation \ref{eq:rice1} from Lemma \ref{lemma:rice1} proves the theorem.
\end{proofsketch}

We now identify the possible point where last $seed$ update happens in RICE for $\Psi(\sigma,\tau)$

\begin{lemma}
With $T$ as the length of array representing $\Psi(\sigma,\tau)$, the last $seed$ update happens at $O(\frac{1}{log_2T})$ fraction prior to $T$. 
\end{lemma}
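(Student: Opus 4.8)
The plan is to bound the length of the \emph{final} subarray, since the last $seed$ update occurs precisely at the boundary immediately preceding it. Inspecting Algorithm~\ref{algo:rice}, the $seed$ is refreshed after every completed subarray execution \emph{except} the one whose execution reaches $T$: in that iteration the test ``$t_l$ is $T$'' succeeds and RICE returns without a further update. Consequently, if RICE uses $S$ subarrays in total (the $S$-th being the one that reaches $T$), the last $seed$ update is the one performed after subarray $S-1$, and the stretch of computation carried out \emph{after} this last update is exactly the final subarray. By the slow-$k$-increase construction this terminating subarray has nominal size $2^{k}$ for some exponent $k$; since it is truncated at $T$, its executed length is $T - t_i + 1 \le 2^{k}$, where $t_i$ is its start index. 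Thus the fraction of the computation lying beyond the last $seed$ update is at most $2^{k}/T$.

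First I would reduce the claim to estimating $2^{k}/T$. Next I would invoke Lemma~\ref{aplemma:maxkrelation}, whose lower bound $2^{k}(k-2)+2 < T$ governs exactly this exponent $k$ of the terminating subarray. Rearranging gives $2^{k} < (T-2)/(k-2)$ for $k>2$, whence
\begin{equation*}
\frac{2^{k}}{T} < \frac{T-2}{(k-2)\,T} < \frac{1}{k-2}.
\end{equation*}
Finally, the RICE Efficiency theorem already establishes $k = \Theta(\log_2 T)$, so substituting yields $2^{k}/T = O(1/k) = O(1/\log_2 T)$, which is precisely the asserted fraction. This is also the quantitative improvement over the second naive scheme discussed in \textsection\ref{sec:choosing indices}, where the unprotected last subarray could be as large as $T/2$, a constant fraction rather than a vanishing one.

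The main obstacle is the bookkeeping, not the algebra: one must argue carefully that \emph{no} $seed$ update is performed within the truncated terminating subarray, so that the entire post-update tail coincides with that single subarray, and that bounding its executed length by the \emph{nominal} (untruncated) size $2^{k}$ is legitimate. One must also confirm that the exponent $k$ appearing here is the same $k$ that Lemma~\ref{aplemma:maxkrelation} relates to $T$, i.e.\ the exponent of the subarray in which termination occurs. Once these identifications are pinned down, the bound follows immediately from Lemma~\ref{aplemma:maxkrelation} together with $k=\Theta(\log_2 T)$, and the remaining manipulation is routine.
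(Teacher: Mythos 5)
Your approach is essentially the paper's: bound the computation remaining after the last $seed$ update by the size of the segment(s) in which execution terminates, then divide by the lower bound on $T$ in terms of $k$ (Lemma~\ref{aplemma:maxkrelation}) and use $k=\Theta(\log_2 T)$. One small correction on the bookkeeping point you yourself flagged: the subarrays delimited by consecutive update indices are \emph{not} the nominal segments of size $2^k$ --- each update index sits at a pseudorandom offset \emph{inside} its nominal segment, so the gap between the last update $t_l$ and $T$ can span the tail of the current segment plus a prefix of the next, giving $T-t_l\le 2^k+2^{k+1}$ (as the paper uses) rather than $\le 2^k$; this changes only the constant and leaves the $O(1/\log_2 T)$ conclusion intact.
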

\begin{proof}
Let $t_l$ be the index of last seed update and let $t_l$ lies inside a segment of length $2^k$ then $T \le t_l + 2^k+2^{k+1}$. Hence we want $\frac{T-t_l}{T}$ is $O\left(\frac{1}{log_2T}\right)$
\[ T > \sum_{i=1}^{k}i2^i = k2^{k+2}+2 \]
\[ \frac{T-t_l}{T} \le \frac{2^k+2^{k+1}}{T} < \frac{3}{4k} \le \frac{3}{4log_2T} \]
\end{proof}

As discussed earlier, it is important for different rounds to use a
different set of indices for updating the seed to prevent free-loading
attacks. We thus prove how RICE prevents free-loading attacks in YODA.


\begin{definition} {\bf (Unmatched index)}
Let RICE$_i$ and RICE$_j$, with $i\ne j$ be two RICE runs in distinct rounds in YODA. 
Then $M_i = \{ t^m_i |\ m \in [1:\phi]\}$ and $M_j = \{t^m_j|\ k \in [1:\phi] \}$ denote the set of indices where $seed$ is updated in rounds $i$ and $j$.
An index $t^m_i \in M_i$ is said to be an {\it Unmatched index} with respect to RICE$_j$ iff $t^m_i \not\in M_j$.
\end{definition}

%

\begin{definition} {\bf Strong Unmatched Index.} An index in RICE$_i$ is called {Strong Unmatched Index} if it is an unmatched index $\forall$ RICE$_j$ where $j<i$.
\end{definition}

We now evaluate the distribution of number of strong unmatched index
in RICE$_i$. The presence of even a single strong unmatched index
implies that even if an adversary assists a quasi-honest node in a
free-loading attack, by revealing root values corresponding to indices in
earlier rounds, these prove insufficient to compute the digest of
RICE$_i$.
\begin{theorem}
Let $X_k$ denote the number of strong unmatched indices in RICE$_i$ where $\Psi'$ terminates in an segment of size $2^k$. Then $X_k$ is strictly smaller than a {\it Poisson Binomial Distribution} \cite{wang1993number} with mean $\mu = \frac{(k-1)k}{2}-2(i-1)$ and variance $\nu^2 \approx \sum_{n=1}^{k-1}n(1-\frac{i-1}{2^n})\frac{i-1}{2^n}$.
\label{thm:poisson binomial}
\end{theorem}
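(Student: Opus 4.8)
The plan is to express $X_k$ as a sum of independent Bernoulli indicators, one per seed-update location, and then identify its law with a Poisson Binomial by computing each indicator's success probability. First I would fix the reference tiling induced by the array $K=[1,2,2,3,3,\dots]$: the group indexed by $n$ consists of $n$ consecutive sub-arrays of size $2^n$, and since $\Psi'$ terminates inside a size-$2^k$ segment, every sub-array of groups $n=1,\dots,k-1$ is fully executed and triggers exactly one seed update, contributing $\sum_{n=1}^{k-1} n = \frac{(k-1)k}{2}$ update indices. The key modeling step is to argue, under the ideal-hash (random oracle) assumption on ${\sf hash}$ and ${\sf RandomGen}$, that within a sub-array of size $2^n$ the offset ${\sf int}(seed[1:n])$ is uniform on $\{0,\dots,2^n-1\}$ and that these offsets are mutually independent across sub-arrays and across rounds, the cross-round independence coming from the reseeding $seed^{(j,0)}\leftarrow{\sf hash}(seed^{(j-1,0)})$.

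Next I would compute, for a single sub-array $m$ of size $2^n$, the probability that its round-$i$ update index $e_m^{(i)}$ is a strong unmatched index. Because the $\textsc{next}$ routine recomputes the $pivot$ by stripping the previous offset and re-aligning to the fixed tiling, the round-$j$ set $M_j$ meets the range of sub-array $m$ only at $e_m^{(j)}$; hence $e_m^{(i)}$ matches round $j$ iff $e_m^{(i)}=e_m^{(j)}$, an event of probability $2^{-n}$, and these matching events are independent across $j<i$. Therefore $\Pr[e_m^{(i)} \text{ strong unmatched}] = (1-2^{-n})^{i-1}$, and by Bernoulli's inequality this is at least $1-(i-1)2^{-n}$. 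Summing the independent indicators over the $n$ sub-arrays of each group $n=1,\dots,k-1$ (and discarding the additional, only positively-contributing, group-$k$ updates) gives a sum of independent Bernoullis with success probabilities $p_n = 1-(i-1)2^{-n}$, i.e. a Poisson Binomial that is stochastically dominated by $X_k$.

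The mean and variance then follow by routine algebra: the mean is $\sum_{n=1}^{k-1} n\,p_n = \frac{(k-1)k}{2}-(i-1)\sum_{n=1}^{k-1} n2^{-n}$, and using $\sum_{n\ge 1} n2^{-n}=2$ to absorb the truncated tail yields $\mu \approx \frac{(k-1)k}{2}-2(i-1)$; the variance of a Poisson Binomial is $\sum_n n\,p_n(1-p_n) = \sum_{n=1}^{k-1} n\big(1-(i-1)2^{-n}\big)(i-1)2^{-n}$, exactly $\nu^2$. I would close by noting that, because every true success probability $(1-2^{-n})^{i-1}$ exceeds the bound $p_n$ used above and the group-$k$ terms were dropped, this Poisson Binomial is a conservative (smaller) surrogate for $X_k$, which is the form used later to lower-bound $\Pr[X_k\ge 1]$ and hence to argue that free-loading fails.

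I expect the main obstacle to be the two probabilistic-modeling steps rather than the arithmetic: justifying that the seed-derived offsets are genuinely uniform and independent both across sub-arrays and across rounds (which forces the random-oracle idealization of ${\sf hash}$), and verifying that $\textsc{next}$'s pivot re-alignment keeps the tilings of different rounds coincident, so that a round-$i$ index cannot be matched by a round-$j$ index belonging to an adjacent sub-array --- any residual boundary drift at group transitions would perturb the per-index match probability away from the clean $2^{-n}$ and would need to be bounded separately.
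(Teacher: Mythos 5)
Your proposal is correct and follows essentially the same route as the paper: decompose $X_k$ into one Bernoulli indicator per fully-executed sub-array, lower-bound each success probability by $1-\frac{i-1}{2^{n}}$ via uniformity of the seed-derived offset over a segment of size $2^{n}$, and sum the $n$ indicators of each group $n=1,\dots,k-1$ to obtain the Poisson Binomial with the stated mean and variance. Your reading of the dominance direction (the Poisson Binomial is the conservative surrogate that $X_k$ stochastically dominates) is the one consistent with how the paper uses the result in the subsequent tail-bound theorem, notwithstanding the statement's phrase ``strictly smaller.''
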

\begin{proof}
The occurrence of  a strong unmatched index in a segment of size $2^m$ in RICE$_i$ is
a Bernoulli random variable with mean lower bounded by
$1-\frac{(i-1)}{2^m}$.
This is a tight bound and the event corresponding to the lower bound occurs when all previous rounds have strong unmatched indices in this segment. There are $m$ such segments of length $2^m$.
%
In the case where {\em all} RICE$_j$ have strong unmatched indices in {\em all} segments,
$X$ is random variable with {\it Poisson binomial distribution} \cite{wang1993number} with mean and variance given in the statement of the theorem.
\end{proof}

Given the above we proceed to find a lower bound on the probability
that the number of strong unmatched index $X_k$ is greater than some
$x_k$. We achieve this by finding a lower bound on $P(X \ge
x)$ in the $i^{\rm th}$ round with the tail of a Binomial Random
variable. As CIC size $T\rightarrow \infty$ we prove that for a series
$x_k\rightarrow \infty$, the tail of the binomial and hence $P(X_k \ge
x_k)$ goes to 1. This means that the number of strong indices
increases without bound w.h.p. as $T$ increases. This in turn
reduces the chances of success of a free-loading attack as it becomes
virtually impossible for a node to guess the $root$ values at an increasingly
large set of strongly unmatched indices.

\begin{theorem}
Let $\Psi(\sigma, \tau)$ in the $i^{th}$ round of \miracle\ ends in a segment of size $2^k$. Given $X_k$ defined as in Theorem \ref{thm:poisson binomial}, we can lower bound the tail probability of $X_k$ i.e $P(X_k\ge x_k)$ for any $x_k\le k(k-1)/2$ with the tail probability of $\mathcal{B}\left(\frac{(k-1)k}{2}-\frac{(b_2-1)b_2}{2}, x_k, 1-\frac{i-1}{2^{b_2}}\right)$ for any $b_2 \le k$ where $\mathcal{B}(n,.,p)$ is a binomial distribution with $n$ trials with $p$ as success probability of each trial. 
\end{theorem}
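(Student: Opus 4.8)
The plan is to write $X_k$ explicitly as a sum of independent Bernoulli indicators, one per completed RICE segment, then lower-bound its tail by discarding the small-segment terms (whose success probabilities are weakest) and stochastically dominating the surviving terms by i.i.d.\ Bernoullis. First I would recall from the proof of Theorem~\ref{thm:poisson binomial} that, when $\Psi(\sigma,\tau)$ terminates in a segment of size $2^k$, the instruction array contains $m$ completed segments of size $2^m$ for each $m=1,2,\ldots,k-1$, and each such segment independently carries a strong unmatched index with probability at least $p_m := 1-\tfrac{i-1}{2^m}$. Hence $X_k$ is a Poisson Binomial over $\sum_{m=1}^{k-1} m = \tfrac{(k-1)k}{2}$ independent indicators, exactly the collection in Theorem~\ref{thm:poisson binomial}.

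Next I would exploit the monotonicity $p_m \ge p_{b_2}$ for every $m \ge b_2$, which holds because $(i-1)/2^m$ decreases in $m$. Fixing $b_2 \le k$, I restrict attention to the segments of size at least $2^{b_2}$, i.e.\ indices $m \in \{b_2,\ldots,k-1\}$. Their count is $n = \sum_{m=b_2}^{k-1} m = \tfrac{(k-1)k}{2} - \tfrac{(b_2-1)b_2}{2}$, which matches the first argument of $\mathcal{B}$, and each of these indicators has success probability at least $p := 1-\tfrac{i-1}{2^{b_2}}$, matching the third argument.

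I would then let $Y$ be the number of strong unmatched indices contributed by only these $n$ large segments. Since $Y$ sums a sub-collection of the indicators whose total is $X_k$, we have $Y \le X_k$ deterministically, so $P(X_k \ge x_k) \ge P(Y \ge x_k)$. Finally, $Y$ is a Poisson Binomial whose $n$ success probabilities are each at least $p$, so it stochastically dominates $\mathcal{B}(n,p)$; consequently $P(Y \ge x_k) \ge P(\mathcal{B}(n,p) \ge x_k)$. Chaining the two inequalities yields $P(X_k \ge x_k) \ge P(\mathcal{B}(n,p) \ge x_k)$, which is the claimed bound. When $x_k > n$ the binomial tail vanishes and the bound holds vacuously, which is precisely why only the weaker constraint $x_k \le k(k-1)/2$ on the threshold is required.

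The main obstacle is formalizing the stochastic dominance of the Poisson Binomial $Y$ over $\mathcal{B}(n,p)$, since everything else is bookkeeping about segment counts. The cleanest route is an explicit monotone coupling: for each of the $n$ segments, drive both a $\mathrm{Bernoulli}(p_m)$ and a $\mathrm{Bernoulli}(p)$ from the same uniform variate, so the former dominates the latter pointwise; summing these couplings gives a realization in which $Y$ is always at least the binomial count, establishing the required domination. This coupling simultaneously absorbs the two approximations in the argument, namely that the individual $p_m$ are only lower bounds and that we have thrown away the $m < b_2$ terms.
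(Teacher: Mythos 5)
Your proposal is correct and follows essentially the same route as the paper: treat each completed segment as an independent Bernoulli trial, discard the segments smaller than $2^{b_2}$ (equivalently, lower-bound their success probability by $0$), uniformly lower-bound the remaining $\frac{(k-1)k}{2}-\frac{(b_2-1)b_2}{2}$ trials' success probabilities by $1-\frac{i-1}{2^{b_2}}$, and conclude by stochastic dominance over the binomial. Your write-up is in fact cleaner than the paper's sketch --- the explicit monotone coupling and the consistent $m\ge b_2$ split (the paper writes $m>b_2$ but counts segments with $m\ge b_2$) tighten details the paper glosses over.
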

\begin{proofsketch}
Call the occurrence of a strong unmatched index in a segment of size
$2^m$ in RICE$_i$ as a trial in that segment.  The trial is a
Bernoulli random variable with mean lower bounded by
$1-\frac{(i-1)}{2^m}$.  If $m>b_2$ then the mean has lower bound
$1-(i-1)/2^{b_2}$ and if $m\le b_2$ the mean is lower bounded by 0.
Hence $X_k$ which is the sum of all trials has tail distribution
strictly higher than the tail of the sum of $(b_2-1)b_2/2$ i.i.d. Bernoulli random
variables with mean $1-(i-1)/2^{b_2}$. There are
$\frac{(k-1)k}{2}-\frac{(b_2-1)b_2}{2}$ number of segments with
$m>b_2$.
The result follows.
\end{proofsketch}
 \begin{lemma}   
As $k\rightarrow \infty$, $P(X>\sqrt{k})\rightarrow 1$.
 \end{lemma}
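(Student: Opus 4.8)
The plan is to invoke the preceding tail theorem with a well‑chosen pair $(x_k,b_2)$ and then push the lower‑bounding binomial tail to $1$ by a concentration argument. Since $X_k$ is integer valued, the smallest integer strictly exceeding $\sqrt{k}$ is $\lfloor\sqrt{k}\rfloor+1$, so I would set $x_k=\lfloor\sqrt{k}\rfloor+1$, giving $P(X_k>\sqrt{k})=P(X_k\ge x_k)$ with $x_k=O(\sqrt{k})$. For all large $k$ this meets the hypothesis $x_k\le k(k-1)/2$ of the previous theorem. Because the round index $i$ is fixed, I would choose $b_2$ to be the constant (depending only on $i$) equal to the least integer with $2^{b_2}\ge 2(i-1)$; this forces $p:=1-(i-1)/2^{b_2}\ge 1/2$, and $b_2\le k$ holds automatically once $k$ is large.

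Writing $S_n\sim\mathrm{Bin}(n,p)$ with $n=\tfrac{(k-1)k}{2}-\tfrac{(b_2-1)b_2}{2}$, the preceding theorem then yields
\[
P(X_k>\sqrt{k})=P(X_k\ge x_k)\ \ge\ P\bigl(S_n\ge x_k\bigr),
\qquad n=\frac{(k-1)k}{2}-\frac{(b_2-1)b_2}{2}.
\]
With $b_2$ fixed we have $n=\Theta(k^2)$ and $p\ge 1/2$, so the mean $\mu=np=\Theta(k^2)$ dwarfs the threshold $x_k=O(\sqrt{k})$. It remains to show $P(S_n\ge x_k)\to 1$, i.e.\ that the lower tail $P(S_n\le x_k-1)$ vanishes. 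Since $(x_k-1)/\mu\to 0$, I would write $x_k-1=(1-\delta_k)\mu$ with $\delta_k\to 1$ and apply the standard Chernoff lower‑tail bound $P\bigl(S_n\le(1-\delta)\mu\bigr)\le \exp(-\delta^2\mu/2)$, obtaining $P(S_n\le x_k-1)\le \exp(-\Theta(k^2))\to 0$. Hence $P(S_n\ge x_k)\to 1$, and therefore $P(X_k>\sqrt{k})\to 1$. Via Lemma~\ref{aplemma:maxkrelation}, $k\to\infty$ is equivalent to $T\to\infty$, so the statement holds in the intended large‑CIC limit.

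The main obstacle—though it proves mild—is the tension between the two free parameters: the success probability $p=1-(i-1)/2^{b_2}$ degrades as $b_2$ shrinks, whereas the trial count $n$ is essentially insensitive to a constant $b_2$. The key observation that makes the argument clean is that, because $i$ is \emph{fixed}, a constant $b_2$ already pins $p\ge 1/2$ while preserving the $\Theta(k^2)$ growth of $n$; letting $b_2$ grow with $k$ is unnecessary and would only complicate the concentration step. The only remaining care points are the conversion of the strict inequality $X_k>\sqrt{k}$ into the integer threshold $x_k=\lfloor\sqrt{k}\rfloor+1$ and the verification of the hypotheses $x_k\le k(k-1)/2$ and $b_2\le k$, all of which hold for every sufficiently large $k$.
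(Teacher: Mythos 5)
Your proposal is correct and follows essentially the same route as the paper: invoke the preceding theorem to lower-bound $P(X_k\ge x_k)$ by a binomial tail with $n=\Theta(k^2)$ trials, then apply a standard concentration inequality (the paper uses the Hoeffding-type bound $P(\mathcal{B}(n,\cdot,p)>l)\ge 1-e^{-2(np-l)^2/n}$ where you use the multiplicative Chernoff lower tail, an immaterial difference). Your explicit choice of a constant $b_2$ pinning $p\ge 1/2$ and your handling of the integer threshold are details the paper's sketch leaves implicit, but the argument is the same.
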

 \begin{proofsketch}
  Choose $x_k=\sqrt{k}$. The result follows from the above Theorem and the use of the well-known bound on the tail distribution of $\mathcal{B}(n,l,p)$ given by
   \[ P\left ( \mathcal{B}(n,.,p)>l \right )\ge 1 - e^{-2\frac{(np-l)^2}{n}} \]
   \end{proofsketch}

\noindent{\bf Remark 1:}
Recall that \miracle\  allows the
system designer to choose an appropriate $|ES|$ size to achieve an
expected number of rounds. In this way, the number of rounds can be
limited to less than a constant $i$
$w.h.p$.

 {\bf Remark 2:}
 Since $\sqrt{k}$ grows unboundedly with $k$, it follows that for large sized ITs, and some finite round $i$, the number of strong indices grows unboundedly w.h.p.
 Since the $root$ values at these strong indices are not known w.h.p. (except for trivial CICs where storage does not change over indices) the final $seed$ also cannot
 be known w.h.p.

 The next result shows that the probability of occurrence of any particular seed value is vanishingly small assuming the roots at different strong indices
 are mutually independent.
\begin{theorem}
  Let the probability mass distribution of the $root$ at all strongly unmatched indices in round $i$ be upper bounded by $1- \lambda$ for some $\lambda>0$. Let the last segment
  of the CIC be of size $2^k$. Then as $k\rightarrow \infty$ the probability mass function of the $seed$ at the end of RICE$_i$ is negligible assuming an ideal hash function, and that the root at different unmatched indices are mutually independent.
\end{theorem}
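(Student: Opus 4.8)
The plan is to show that the maximum probability mass $\max_s P(S = s)$ of the terminal seed $S = seed^{(i,\phi)}$ produced by RICE$_i$ tends to $0$ as $k\to\infty$, which is exactly the assertion that the pmf of $S$ becomes negligible. First I would observe that, by the update rule $seed^{(i,l+1)} \leftarrow {\sf hash}(seed^{(i,l)}\,\|\,root(\sigma'))$, the terminal seed is a fixed deterministic function $S = F(r_1,\dots,r_\phi)$ of the ordered sequence of root values encountered at the $\phi$ update indices of the run. All randomness in $S$ therefore originates in the randomness of this root sequence, and the task reduces to bounding the concentration of that sequence.

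The key reduction invokes the ideal-hash assumption: two distinct root sequences feed distinct inputs into the hash chain and hence collide in the final seed only with negligible probability $\epsilon_{\mathrm{coll}}$. Consequently $F$ is injective on root sequences up to an event of probability $\epsilon_{\mathrm{coll}}$, so for every fixed target value $s$ at most one sequence maps to it and $P(S=s)\le \max_{\vec r} P(\vec R = \vec r)+\epsilon_{\mathrm{coll}}$. To bound $\max_{\vec r}P(\vec R=\vec r)$ I would restrict attention to the coordinates at the strong unmatched indices: since a marginal dominates the joint, $P(\vec R=\vec r)$ is at most the probability that precisely the strong-unmatched coordinates take their prescribed values. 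By the hypothesised mutual independence of the roots at distinct unmatched indices together with the per-index bound $1-\lambda$ on each root's pmf, this product is at most $(1-\lambda)^{X}$, where $X=X_k$ is the number of strong unmatched indices. Thus, conditioned on $X=x$, we obtain $\max_s P(S=s\mid X=x)\le (1-\lambda)^{x}+\epsilon_{\mathrm{coll}}$.

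It remains to remove the conditioning and exploit that $X$ is large w.h.p. Averaging over $X$ and using that the maximum over $s$ passes the (non-negative) mixture as an upper bound, I get $\max_s P(S=s)\le \mathbb{E}\big[(1-\lambda)^{X}\big]+\epsilon_{\mathrm{coll}}$. Splitting on the event $\{X>\sqrt k\}$ yields $\mathbb{E}[(1-\lambda)^{X}]\le (1-\lambda)^{\sqrt k}+P(X\le \sqrt k)$. The first term vanishes as $k\to\infty$ since $\lambda>0$, and the second vanishes by the preceding Lemma, which established $P(X>\sqrt k)\to 1$. Hence $\max_s P(S=s)\to 0$, and the seed's pmf is negligible, as claimed.

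The step I expect to be the main obstacle is making the ideal-hash injectivity argument rigorous: one must argue that over the realistic range of inputs the hash chain introduces no collisions except with negligible probability, so that the pmf of the output seed is genuinely governed by the pmf of the input root sequence and is neither smeared nor concentrated by the hash. A secondary care point is the interchange of the maximum over seed values with the expectation over the random count $X$; this is valid only as an upper bound (for each fixed $s$ the mixture is dominated by the mixture of the per-$x$ maxima), and I would state it as an inequality rather than an equality.
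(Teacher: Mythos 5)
Your proposal is correct and follows essentially the same route as the paper's proof sketch: use the ideal hash to transfer the pmf bound from the root sequence to the seed, apply independence and the per-index bound $1-\lambda$ at the $X$ strong unmatched indices to get the factor $(1-\lambda)^{X}$, and conclude via the preceding lemma that $X>\sqrt{k}$ w.h.p. Your version is somewhat more careful than the paper's (explicitly tracking the hash-collision term and the interchange of $\max_s$ with the expectation over $X$), but these are refinements of the same argument rather than a different approach.
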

\begin{proofsketch}
  Let $j\in [1:X]$ denote the $X$ strongly unmatched indices, and $r_j$ and $s_j$ the corresponding root and seed. Since the hash function is ideal, it maps unique inputs to unique outputs.

  Thus $P({\sf hash}(s_j||r_j))=P(s_j,r_j)=P(s_j)P(r_j)$. The last equality is due to the independence assumption.  We assume that $s_1$, the seed at the first strong unmatched index, is known to the node and hence $P(s_1)=1$.
  Denoting $seed_{X+1}$ as the final seed, we have $P(seed_{X+1})=\prod_j P(r_j)\le \prod_j \max P(root_j)\le (1-\lambda)^X$. Since $X$ is larger than $\sqrt{k}$ w.h.p. as $k\rightarrow \infty$ we have $P(s_{X+1})\rightarrow 0$.  
\end{proofsketch}
\\

\noindent{\bf Remark:} The roots of indices ``far apart'' being independent is not unrealistic, except for trivial CICs. Strong unmatched indices are in different segments and hence except for neighboring indices, they are separated by whole segments, and hence  we conjecture that the independence assumption is a good approximation in practice. We also conjecture that the same result holds for weaker assumptions than stated in the theorem and leave the proof for future work.

\subsection{Free-loading attack}
\label{sub:freeloading}
We now analyze a free-loading attack where a quasi-honest node skips 
computation of the CIC by using information available on the 
blockchain and/or state information of $\Psi(\sigma, \tau)$ from 
previous rounds received from an adversary~\textsection\ref{sec:rice}.
We consider the best case scenario for the free-loading node where it 
knows the correct $root$ of the $digest$ w.h.p. but has to guess the $
seed$. We analyze the case where Byzantine nodes have maximum fraction 
$f_{max}$ in SP and all submit the same incorrect $root$ with the same 
$seed$ in order to maximize the probability of \miracle\ selecting 
their solution, and where quasi-honest nodes do not collude.

Denote the profile where all quasi-honest nodes execute the CIC as 
$\overrightarrow{a}$ and the profile where only a single quasi-honest 
node $n_i$ free-loads as $\overrightarrow{a}_{-i}$. With 
$\overrightarrow{a}$ the analysis of \miracle\ with honest and 
Byzantine nodes holds. 
Hence quasi-honest nodes win reward $\mathcal{R}$ with 
probability $1-\beta$, and lose their deposits $\mathcal{D}$ with 
probability $\beta$. The cost of computing the CIC is $c_1$.
Hence the utility for $n_i$ with this profile is
\begin{equation}
	\mathcal{U}_i(\overrightarrow{a}) = (1-\beta)\mathcal{R} - \beta \mathcal{D} - c_1
    \label{eq:utility a1}
\end{equation}

Let $\gamma$ be the probability of $n_i$ guessing the correct seed
while free-loading. If it guesses the correct seed then its
probabilities of winning a reward and losing its deposit are
$(1-\beta)$ and $\beta$ as above. If it guesses the wrong seed then it
loses its deposit. We denote by $c_2$ the cost of bandwidth consumed 
for downloading intermediate $storage$ of previous rounds from an
adversary and analyzing them to predict the $seed$. Then the utility
\begin{equation}
	\mathcal{U}_i(\overrightarrow{a}_{-i}) = \gamma((1-\beta)\mathcal{R} - \beta \mathcal{D})-(1-\gamma)\mathcal{D} - c_2
    \label{eq:utility a2}
\end{equation}
From
(\ref{eq:utility a1}) and (\ref{eq:utility a2}) we obtain $ \mathcal{U}_i(\overrightarrow{a})-\mathcal{U}_i(\overrightarrow{a}_{-i})>0$ iff
\begin{equation}
  \mathcal{R} + \mathcal{D}> \frac{c_1-c_2}{(1-\beta)(1-\gamma)} \approx c_1 - c_2
    \label{eq:utility diff}
\end{equation}
where the last approximation is due to the fact that $\gamma$ is vanishingly small in practice  and $\beta$ is a design parameter chosen to be small. Since $\mathcal{R}>c_1$, that is the reward must be more than the cost of computation, we see that (\ref{eq:utility diff}) is true. Hence profile $\overrightarrow{a}$ is a Nash equilibrium \cite{nash1950equilibrium}.

\subsection{Collusion Attack}
\label{sub:collusion}
We now  consider the case where a group $\mathcal{C}$ of ES nodes 
collude to submit a common seed. We assume they know the correct root $
w.h.p$, that Byzantine nodes all submit the same incorrect $root$ with 
the same $seed$ in order to maximize the probability of \miracle\ 
selecting their solution, and that all other quasi-honest nodes 
execute the CIC correctly.
Since $|ES|$ is random, nodes in $\mathcal{C}$ cannot be entirely sure 
if $|\mathcal{C}|$ is larger than $th_1 |ES|$ or less than $th_2 |ES|$.
Suppose $|\mathcal{C}|>th_1|ES|$ with probability $\gamma_1$ and 
$|\mathcal{C}|<th_2 |ES|$ with probability $\gamma_2$.
The computation cost of colluding requires solution of ZK-proofs since 
nodes need to prove they belong to ES without revealing their 
Denote the associated costs by $c_3$ and this profile by 
$\overrightarrow{a}_{-\mathcal{C}}$.

In case the Byzantine nodes win \miracle, $\mathcal{C}$ lose their deposits. In case the correct root is selected by \miracle, $\mathcal{C}$ win a reward with probability $\gamma_1$, and lose their deposits with probability $\gamma_2$. Hence utility for node $n_i\in \mathcal{C}$
\begin{equation}
	\mathcal{U}_i(\overrightarrow{a}_{-\mathcal{C}}) = \gamma_1((1-\beta)\mathcal{R} - \beta \mathcal{D})-\gamma_2\mathcal{D} - c_3
    \label{eq:utility col}
\end{equation}
    In case $\gamma_1=1$, $U_{i}(\overrightarrow{a})$ is a $\epsilon$-Nash equilibrium \cite{radner1982collusive} with $\epsilon = c_3-c_1$. In this special case, if the $c_3$ is larger than the CIC computation cost itself, the nodes are better off being honest.
Note that  higher $|\mathcal{C}|$ increases $\gamma_1$, but also increases $c_3$ because of more ZK proofs, and related communication costs.

\subsection{Meeting the Requirements.}
\label{sub:meeting requirements}
%
We here summarize how various mechanisms in YODA meet the
goals listed in~\textsection \ref{sec:introduction}.  Most
requirements are met due to \miracle. For all $f_{max}<0.50$, \miracle
terminates and thus off-chain execution of a CIC also {\em
  Terminates}. \miracle\ allows a system parameter $\beta$ which is the
probability of accepting a incorrect solution, thus achieving {\it
  Validity} with tunable high probability. {\it Agreement} on
off-chain CIC execution trivially follows from the {\it safety}
guarantees of the blockchain.  Recall in YODA, to initiate the reward
mechanism process ES nodes need to submit the correct storage root in
the blockchain and the miners verify it before its inclusion ensuring
{\em Availability} of the post-execution state $\sigma^*$.

Since YODA never requires  the blockchain to verify the CIC execution on-chain, YODA is {\it Oblivious}. 
YODA requires ES to be much smaller than naive approaches discussed in \textsection \ref{sec:miracle} thus making YODA {\em Efficient}. 
\miracle\ is {\it Adaptive} to the fraction of Byzantine nodes in SP, and terminates earlier the smaller this is  \textsection \ref{sec:miracle}. For appropriate choice of $th_1, th_2$ in \textsection \ref{sub:reward} YODA ensures {\it weak-Fairness}.

\subsection{Stronger Adversary Scenarios with Rational Nodes}
\label{sub:rational nodes}
We now discuss how YODA may perform if we replace quasi-honest nodes by {\em rational} nodes and also allow stronger adversaries. Unlike quasi-honest nodes, rational nodes are not conservative 
(ref.~\textsection\ref{sub:threat assumption}) towards Byzantine nodes. 
The stronger adversary is allowed to try to reveal information about 
the current round to a rational node, and not just information about 
past rounds.


Consider the case where an adversary $\mathcal{A}$ is actively
providing information about the $root$ at intermediate RICE indices
for the current round. Recall from RICE \textsection
\ref{sub:rice analysis} that for each round we will have a large number
of strong unmatched indices with large probability, and thus for all
such unmatched indices, a rational node $\mathcal{N}$, has to obtain
the $root$ from $\mathcal{A}$.
How will $\mathcal{A}$ convince $\mathcal{N}$ about the correctness of
the states at these intermediate points? One mechanism is that 
$\mathcal{A}$ commits its digest in the current round and gives a
zk-Proof to $\mathcal{N}$ about it. A challenge for $\mathcal{N}$ is 
that it does not know whether $\mathcal{A}$ is Byzantine or rational.  
Even if we consider that rational nodes commit correct digest, when 
$\mathcal{A}$ is Byzantine then its commitment could be false.

Consider the case where two or more rational nodes in an ES want to
collaborate to generate the correct $digest$. Consider one rational 
node $\mathcal{N}_R$ interacting with another node $\mathcal{N}_U$ 
which might potentially be Byzantine. First, both nodes must prove 
that they belong to ES. Ruling out a non-Interactive zk-Proof, 
$\mathcal{N}_R$ and $\mathcal{N}_U$ have to produce an interactive 
zk-Proof, and this may be vulnerable to DoS attacks.
Once the proofs are established, one strategy could be to split the 
execution among them. This situation again boils down to trusting the 
result claimed by $\mathcal{N}_U$ about its execution.

Finally we  discuss pragmatic concerns related to a {\it collusion attack}. In our utility analysis we considered the case
where all nodes in the colluding set $\mathcal{C}$ are all
quasi-honest but in practice it will possibly contain Byzantine
nodes as well. This will possibly lower the probability $\gamma_1$ of
successful collusion. Since the success of the collusion attack depends
on $|ES|$, imperfect knowledge of $|ES|$ during the commitment phase 
of RICE lowers $\gamma_1$ even more. 
Interestingly even with knowledge of $|ES|$ prior to collusion and 
$|\mathcal{C}|\ge th_1|ES|$, the success truly depends on the behavior 
of Byzantine nodes in $\mathcal{C}$.

\section{Implementation and Evaluation}
\label{sec:evaluation}
\begin{figure*}[!htb]
\minipage{0.33\textwidth}
  \pgfplotsset{footnotesize,height=6cm, width=\linewidth}
  \begin{tikzpicture}
    \begin{semilogxaxis}[
      ylabel=Time (in seconds),
      xlabel=Gas Usage (multiples of {$5.3$}){,} {$|ES|=40$},
      bar width=12pt,
      legend pos=north west,
    ]
    \addplot [
      ybar stacked,
      black,
      fill=clr1,
    ]
    coordinates {(10^6,60) (10^7,60) (10^8,60) (10^9,60) (10^10, 160)}; 

    \addplot [
      ybar stacked,
      black,
      fill=clr2,
    ]
    coordinates {(10^6,20) (10^7,20) (10^8,20) (10^9,20) (10^10,20)}; 
    \addplot [
      ybar stacked,
      black,
      fill=clr3,
    ]
    coordinates {(10^6,40) (10^7,40) (10^8,40) (10^9,40) (10^10,40)}; 

    \addplot [line width=0.30mm, red, mark=square] 
    coordinates {(10^6,20) (10^7,20) (10^8,21) (10^9,32) (10^10, 82)};
    \addlegendentry{$w_{src}$}
    \addlegendentry{$w_{buf}$}
    \addlegendentry{$w_{sr}$}
    \end{semilogxaxis}
  \end{tikzpicture}
  \caption{Measured CIC execution time with varying gas usage.}
  \label{fig:gasincrease}
\endminipage\hfill
\minipage{0.33\textwidth}
    \pgfplotsset{footnotesize,height=6cm, width=\linewidth}
    \begin{tikzpicture}
    \begin{semilogxaxis}[
    xtick=\empty,
    xlabel=\# Parallel CICs{,} {$|ES|=40$},
    ylabel=Time (in seconds),
    extra x ticks = {1,2,4,8,16},
    extra x tick labels = {1,2,4,8,16},
    bar width=12pt,
    legend pos=north west
    ]
      \addplot [
        ybar stacked,
        black,
        fill=clr1,
      ]
      coordinates {(1,60) (2,60) (4,60) (8,60) (16, 60)}; 
        \addplot [
        ybar stacked,
        black,
        fill=clr2,
      ]
      coordinates {(1,20) (2,20) (4,20) (8,20) (16, 20)}; 

      \addplot [
        ybar stacked,
        black,
        fill=clr3,
      ]
      coordinates {(1,40) (2,40) (4,40) (8,40) (16, 40)}; 
      \addplot [line width=0.30mm, red, mark=square, error bars/.cd, y dir=both,y explicit] coordinates {(1,20) (2,21) (4,26) +- (4.5,4.5) (8,31.34) +-(11.21,11.87) (16, 45.06) +- (26.13,22.12)};
    \addlegendentry{$w_{src}$}
    \addlegendentry{$w_{buf}$}
    \addlegendentry{$w_{sr}$}
    \end{semilogxaxis}
    \end{tikzpicture}
    \caption{Average digest commit time with increasing number of parallel ITs.}
    \label{fig:multiple}
\endminipage\hfill
\minipage{0.33\textwidth}
\pgfplotsset{footnotesize,height=6cm, width=\linewidth}
  \begin{tikzpicture}
  \begin{axis}[
    xmin=0.20,
    xmax=0.45, 
    ymin=0.0,
    ymax=8,
    domain=0.20:0.45,
    samples=25,
    xlabel={$f_{max}$},
    ylabel={Expected \# rounds},
    xtick=\empty,
    legend pos=north west,
    extra x ticks={0.2,0.25,0.30,0.35,0.40,0.45},
    extra x tick labels={0.2,0.25,0.30,0.35,0.40,0.45}
  ]
    \addplot[blue, dashed, mark options=solid, mark=triangle] coordinates {(0.20,1.00) (0.25,1.00) (0.30,1.12) (0.35,1.25) (0.40,2.13) (0.45,7.61)};
    \addplot[blue] (x, {((1-10^(-10))*ln((1-10^(-10))/10^(-10))+(10^(-10))*ln((10^(-10))/(1-10^(-10))))/((200*(1-2*x)*(1-2*x) + (1-2*x)*0.875)/(1.75*x) + 0.5*ln((x/(1-x))))});

    \addplot[red, dashed, mark options=solid, mark=square] coordinates {(0.20,1.00) (0.25,1.00) (0.30,1.06) (0.35,1.21) (0.40,1.47) (0.45,4.52)};
    \addplot[red] (x, {((1-10^(-6))*ln((1-10^(-6))/10^(-6))+(10^(-6))*ln((10^(-6))/(1-10^(-6))))/((200*(1-2*x)*(1-2*x) + (1-2*x)*0.875)/(1.75*x) + 0.5*ln((x/(1-x))))});


    \addlegendentry{$\beta=10^{-10}$ (Theo.)}
    \addlegendentry{$\beta=10^{-10}$ (Expt.)}
    \addlegendentry{$\beta=10^{-6}$ (Theo.)}
    \addlegendentry{$\beta=10^{-6}$ (Expt.)}
  \end{axis}
  \end{tikzpicture}
  \caption{Change in Expected number of rounds vs. $f_{max}$ for $|SP|=1600, q=0.125$.}
  \label{fig:betavary}
\endminipage\hfill
\end{figure*}
To experimentally evaluate the scalability of YODA, we have 
implemented a prototype which includes all parts of YODA except SP 
selection, on top of the popular Ethereum geth client version 
{\tt 1.8.7} and evaluate them in a private network. The SP selection procedure is independent of CIC 
transaction gas limits and hence does not impact scalability. Our 
implementation consists of $\sim$500 lines of code (LOC) in Solidity 
(for the Master Contract and sample CICs) and $\sim$2000 LOC in python 
on top of the interface for the modified client whose task is detailed 
below. 

\noindent{\bf Experimental Environment.}
We use 8 physical machines each with a $2.80\times8$ GHz  intel 
Core-i7 processor with 8GB RAM and running Ubuntu 17.10 to simulate 16 
off-chain clients. Each client emulates 100 YODA nodes thus making 
$M=|SP| = 1600$. Since off-chain CIC execution requires considerable 
computations resources, we restrict the number of clients per machine 
to two. All these 16 clients are connected to an Ethereum network
created using geth which we consider to be  the blockchain network. 
The blockchain runs on Proof of Authority in geth which is a developer 
mode option in the Ethereum private network to keep the block mining time 
fixed. A separate physical machine with similar specifications  
handles transactions and mines blocks. To achieve a required $|ES|$ we
allow a single client to execute an IT only once and submit commitments 
for multiple SP nodes based on how many of its nodes are selected by 
Sortition. For each execution we regulate $q$ to achieve an 
appropriately sized ES.

\noindent{\bf Executing CICs in the EVM.}
To execute ITs off-chain, each client in our system runs a modified
EVM supported with an implementation of RICE. This EVM provides the
interface for off-chain execution of a CIC. The clients deploy the
contract in their local EVM whenever a CIC is created on the blockchain.  When a IT is broadcast on-chain, clients download the transaction and execute them locally in the EVM based on the sortition result.  After execution each client broadcasts a transaction containing the $digest$ for each of its ES nodes.  An identical mechanism is used to reveal the
commitment.

\subsection{Scalability of CICs.}
\label{sub:cic scalability}
We start with a CIC containing a parameterizable function called
\texttt{Compute()} with parameter $\eta$ which sets the amount of gas
to be used. Internally \texttt{Compute()} runs iterations modifying a
CIC state variable after performing arithmetic operations.  We start
\texttt{Compute()} with gas usage $5.3\times10^6$ and increase this by 
a factor of 10 rising up to $5.3\times10^{10}$.  Note that each
Ethereum block can only accommodate up to 8M gas in 15 seconds
(avg. block generation time) at the time of writing and hence we
conclude that its maximum gas usage per second is ${5.3\times10^5}$.

\noindent{\bf Varying Gas Usage.} 
In Figure~\ref{fig:gasincrease} we plot the average of measured time
from the start of the round up to the commitment (S1-S2) as a red line
and the range of values as error bars. Observe that the average time 
remains constant at about 20 secs till gas usage of $5.3\times10^8$ 
after which it increases to about 30 secs for $5.3\times10^9$ and 
to about 100 secs after that. The total submission time includes
off-chain computation time and in addition Ethereum on-chain
delays, such as the time for a transaction to be included in a block
on-chain. Clearly for gas $5.3\times10^9$ or less the on-chain delays
dominate after which off-chain delays dominate. 

We also plot a bar graph depicting the windows for committing the 
storage root $w_{src}$, the buffer period $ w_{buf}$, and the window 
to reveal the storage root value $w_{sr}$. Among the three only 
$w_{src}$ depends on $gas$ usage since the computation of the IT 
happens during this time. The total time for one round is 
$w_{src} + w_{buf} + w_{sr}$.

From the experiment with CIC gas usage equal to $5.3\times10^{10}$, 
we see that YODA consumes $240$ Million gas per second. This amount 
is $450\times$ more than the existing amount of gas Ethereum can use 
per second. Note that this speedup is when only a single IT is running. 
With parallel execution of ITs this scales up further as we 
demonstrate in our next experiment.

\noindent{\bf Parallel CICs.} We further test YODA by running up to 
16 parallel ITs. Figure~\ref{fig:multiple} shows the time taken for 
executing different number of concurrent ITs. All ITs are invoked at 
once in a single block on-chain and the $gas$ usage of all are kept 
identical.
The red line in Figure~\ref{fig:multiple} records the average of the 
storage root commitment times and error bars are used to indicate 
the range of these. Observe that  the minimum commit time remains 
almost constant, indicating that the time for off-chain execution is 
the same. However the maximum value increases. This is because more 
blocks are needed to include the increased number of commitment 
transactions. As a result the average commit time increases gradually. 
As future work we will devise mechanisms to automatically provision 
$w_{src}$ taking this phenomenon into account.





\noindent{\bf Evaluation of \miracle.}
We next evaluate the performance of \miracle\ in the presence 
of a Byzantine adversary. In our experiments the adversary uses 
the best strategy, that is it submits a single incorrect solution 
for all nodes it controls.

For system design the expected number of rounds is a crucial
parameter. We determined this quantity experimentally and compared it
to its theoretical approximation.  In Figure \ref{fig:betavary} we
plot $\mathbb{E}[\text{\# Rounds}]$ versus the fraction of
Byzantine nodes $f$ for different values of the probability of
accepting an incorrect storage root $\beta$ in the range $10^{-3}$ to
$10^{-10}$. For this experiment we fixed parameter values $q=0.125,
M=1600$ giving us an ES of expected size 200.  Notice that
$\mathbb{E}[\text{\# Rounds}]$ increases with an increase in $f$ and
largely agrees with the theoretical approximation. The theoretical
approximation has an artifact in that it can be less than 1 which is
impossible because the number of rounds is at least 1 always.

\noindent{\bf Evaluation of RICE.}
We next evaluate the overheads associated with RICE when implemented 
on the EVM geth client. For this experiment, we measure CIC execution 
time on the unmodified EVM and then perform the same experiment 
in a EVM modified with RICE implementation. For each gas usage, we aggregate
the results over 200 repetitions. Figure \ref{fig:rice absolute} shows the 
time difference of CIC execution with RICE and without RICE. As expected, 
the absolute difference increases as gas increases due to the presence 
of more update indices. More interesting to observe is Figure~\ref{fig:rice relative} 
where we plot gas usage vs. relative execution time i.e ratio of absolute 
time difference and CIC execution time without RICE.
First observe that the relative overhead due to RICE is extremely low. 
As gas increases, the relative time decreases because the RICE indices 
become sparse in later segments and hence add less overhead. 
During the early part of the graph wee see a small aberration. 
This is because time with and without RICE are small and hence minor 
variations in the absolute time difference get magnified 
relative to time without RICE.
\begin{figure}[h!]
    \centering
    \pgfplotsset{small,height=4cm, width=\linewidth}
    \begin{tikzpicture}
    \begin{semilogxaxis}[
    legend pos=north west,
    xlabel=Gas usage(in Millions),
    ylabel=Time(in seconds),
    grid=major
    ]    

      \addplot [line width=0.25mm, blue, mark=square] coordinates {( 1.086869 , 0.001792359109999996 ) ( 2.146869 , 0.002796019980000011 ) ( 4.266869 , 0.0033971302099999575 ) ( 8.506933 , 0.004937859729999957 ) ( 16.986869 , 0.006934588240000096 ) ( 33.946869 , 0.014554146069999774 ) ( 67.866869 , 0.03029743274999987 ) ( 135.706869 , 0.0701978814099989 ) ( 271.386869 , 0.15428714813999977 ) ( 542.746869 , 0.2749215859300002 ) ( 1085.466869 , 0.44562336938000224 ) ( 4341.786869 , 1.4559698687000056 ) ( 8683.546869 , 2.4471964238000217 ) };
    \end{semilogxaxis}
    \end{tikzpicture}
    \caption{Absolute overhead of RICE in CIC execution plotted against increasing gas usage.}
    \label{fig:rice absolute}
\end{figure}
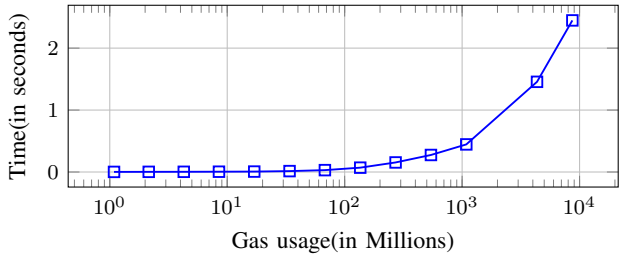
    %
\begin{figure}[h!]
    \centering
    \pgfplotsset{small,height=4cm,width=0.95\linewidth}
    \begin{tikzpicture}
    \begin{semilogxaxis}[
    xlabel=Gas usage(in Millions),
    ylabel={$\frac{\text{Absolute time difference}}{\text{Time without RICE}}$},
    legend pos=north west,
    ymax=0.25,
    ytick=\empty,
    extra y ticks={0,0.05,...,0.25},
    extra y tick labels = {0,0.05,0.10,0.15,0.20,0.25},
    grid=major
    ]
      \addplot [line width=0.25mm, blue, mark=square] coordinates {( 1.086869 , 0.08996924130235726 ) ( 2.146869 , 0.07808681276070428 ) ( 4.266869 , 0.049294443145826136 ) ( 8.506933 , 0.036947388648926154 ) ( 16.986869 , 0.026281226613264427 ) ( 33.946869 , 0.027617804146803683 ) ( 67.866869 , 0.028914140694510093 ) ( 135.706869 , 0.033917838741313745 ) ( 271.386869 , 0.03673828617808041 ) ( 542.746869 , 0.03279082289810635 ) ( 1085.466869 , 0.026494561779651725 ) ( 4341.786869 , 0.021601461778338417 ) ( 8683.546869 , 0.01812163601123794 )};
    \end{semilogxaxis}
    \end{tikzpicture}
    \caption{Relative overhead of RICE in CIC execution plotted against increasing gas usage.}
    \label{fig:rice relative}
\end{figure}
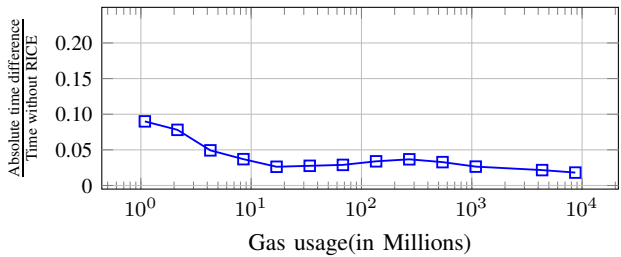

 \section{Related Work}
 \label{sec:related}

 The threat model of combining Byzantine and selfish nodes in
 distributed systems dates back to Aiyer et al.  \cite{aiyer2005bar}
 Prior to their work threat models in distributed systems either
 considered the presence of perfectly honest nodes and Byzantine
 nodes, or only selfish nodes. The {\it Byzantine Altruistic Rational}
 (BAR) considered a more realistic scenario combining the two models in
 a permissioned cooperative service with PKI
 \cite{aiyer2005bar}. Our threat model is of the BAR variety.
 

Most analysis of blockchain consensus protocols in the permissionless
case limit themselves to a threat model of Byzantine with honest
nodes which are not BAR models. These include works on 
Fruitchains~\cite{pass2017fruitchains}, Algorand~\cite{gilad2017algorand}, 
and the sleepy model of consensus~\cite{pass2017sleepy}. 
In the sleepy model, honest nodes may go offline and not participate 
in the protocol. Ouroboros~\cite{kiayias2017ouroboros} introduces 
$\epsilon$-Nash Equilibrium for a proof-of-stake protocol. 
Selfish mining~\cite{sapirshtein2016optimal} shows in case a non-zero 
fraction miners in PoW blokchains behave selfishly, honest behaviour is 
no longer an equilibrium, because individuals unilaterally benefit by 
joining hands with the attacker. All these works solve a very different 
problem from YODA, namely that of block consensus. Blocks can potentially 
take many values and are easy to verify. In contrast, CIC computation can 
have only one correct value and are computationally intensive to verify.

Truebit is a proposal to enable CICs on permissionless blockchains, in
the presence of selfish nodes \cite{teutsch2017scalable}.  Truebit
requires a single {\em Solver} to execute and upload the results of
the transaction, and any number of volunteer verifiers to verify the
Solver's solution. There is no bound on the number of verifiers unlike
in YODA. Moreover Truebit does not claim to provide guarantees for
probability of correct CIC computation under a threat model. In fact,
recent work shows that Truebit is susceptible to a Particpation
Dilemma, where if all participants are rational, an equilibrium exists
with only a single verifier which can cheat at will~\cite{kalodner2018arbitrum}.  
It also makes payouts to verifiers rare events, unlike YODA which pays 
ES members rewards immediately.  

Arbitrum is 
a system for scalable off-chain execution 
of private smart contracts developed concurrently with our work~\cite{kalodner2018arbitrum}. In Arbitrum, each 
smart contract can assign a set of managers who execute its transactions 
off-chain. Any one manager can submit a hash of the updated state on-chain. 
In addition, any other manager can submit a challenge if this earlier 
submitted state is incorrect. 
Arbitrum works under a threat model with at least one
honest manager and the rest of the managers being rational. It has not
been proved to work in the presence of Byzantine managers, or with all
managers being rational. In contrast, YODA works in the presence of
both Byzantine and selfish nodes, none of which need to be
honest. YODA is also not restricted to private smart contracts.

Several other papers focus on sharding for improving performance of
permissionless blockchains \cite{luu2016secure,kokoris2018omniledger,al2017chainspace}. None of these, however, focus on the specific problem of
executing CIC transactions efficiently. They instead increase
throughput in terms of number of non-IT transactions executed over time. The
execution (or verification of correct execution) is implicitly assumed to take
little time, and all miners verify all transactions.

\section{Discussion and Conclusion}
\label{sec:discussion}
We have presented YODA which enables permissionless blockchains to
compute CICs efficiently. 
Experimental results show that individual ITs with gas usage 450 times 
the maximum allowed by Ethereum can be executed using the existing EVM. YODA uses 
various incentives and technical mechanisms such as RICE to force rational 
nodes to behave honestly. Our novel \miracle\ algorithm uses multiple 
rounds to determine the correct solution and shows great savings in 
terms of number of rounds when the actual Byzantine fraction of nodes 
is less than the assumed worst case.

One advantage of YODA is its modular design. Several modules can be left 
intact, while replacing the others. Examples of  such modules are RICE, 
\miracle, SP selection, and ES selection, which can in future be replaced 
by alternatives. 

Several issues need to be addressed before YODA can
become a 
full fledged practical system. 
One open problem we have not addressed is the issue of data. Often, large
CICs are likely to have large state and each IT can potentially modify
many state variable. Broadcasts of every update for every IT can be
costly in terms of communication. 
 A possible alternative
to this state update could be storage of data in a Distributed File
System like IPFS~\cite{benet2014ipfs} and using Authenticated Data 
structures such as Versum~\cite{van2014versum} to store a succinct 
representations of it in the blockchain. 

Another concern is regarding the number of additional transactions
needed to achieve consensus on a CIC. For each round, \miracle\
requires each ES node to submit two short transactions. Also the count
of such transactions depends on $f$. Observe from
Figure~\ref{fig:miracle qpickup} that in the best case YODA requires only
$\approx70$ transactions with $f=0$ and $\beta=10^{-20}$. With contemporary
blockchain solutions that claims to scale up to $1000$s transaction per
second~\cite{gilad2017algorand,zamani2018rapidchain} these
transactions consume relatively small bandwidth.

The periods $w_{src},w_{sr}$ chosen for execution of CICs off-chain 
will in practice also depend on the number of simultaneous ITs being 
currently executed by YODA. This is because, as the number of 
simultaneous ITs increase, the average CIC workload on any node
increases as well, since
each node may belong to multiple ES sets simultaneously.
As CICs are computationally expensive, the MC must further keep a limit on 
ITs at any instant of time to reduce the maximum load on an ES node. 


\section*{Acknowledgments}
The authors would like to thank Aditya Ahuja, Cui Changze, Aashish Kolluri, Dawei Li,
Sasi Kumar Murakonda, Prateek Saxena, Ovia Seshadri, Subodh Sharma, and anonymous
reviewers for their feedback on the early draft of the paper.



\bibliographystyle{IEEEtranS}
\bibliography{IEEEabrv,references}

\begin{thebibliography}{10}
\providecommand{\url}[1]{#1}
\csname url@samestyle\endcsname
\providecommand{\newblock}{\relax}
\providecommand{\bibinfo}[2]{#2}
\providecommand{\BIBentrySTDinterwordspacing}{\spaceskip=0pt\relax}
\providecommand{\BIBentryALTinterwordstretchfactor}{4}
\providecommand{\BIBentryALTinterwordspacing}{\spaceskip=\fontdimen2\font plus
\BIBentryALTinterwordstretchfactor\fontdimen3\font minus
  \fontdimen4\font\relax}
\providecommand{\BIBforeignlanguage}[2]{{%
\expandafter\ifx\csname l@#1\endcsname\relax
\typeout{** WARNING: IEEEtranS.bst: No hyphenation pattern has been}%
\typeout{** loaded for the language `#1'. Using the pattern for}%
\typeout{** the default language instead.}%
\else
\language=\csname l@#1\endcsname
\fi
#2}}
\providecommand{\BIBdecl}{\relax}
\BIBdecl

\bibitem{aiyer2005bar}
A.~S. Aiyer, L.~Alvisi, A.~Clement, M.~Dahlin, J.-P. Martin, and C.~Porth,
  ``Bar fault tolerance for cooperative services,'' in \emph{ACM SIGOPS
  operating systems review}, vol.~39, no.~5.\hskip 1em plus 0.5em minus
  0.4em\relax ACM, 2005, pp. 45--58.

\bibitem{al2017chainspace}
M.~Al-Bassam, A.~Sonnino, S.~Bano, D.~Hrycyszyn, and G.~Danezis, ``Chainspace:
  A sharded smart contracts platform,'' \emph{arXiv preprint arXiv:1708.03778},
  2017.

\bibitem{androulaki2018hyperledger}
E.~Androulaki, A.~Barger, V.~Bortnikov, C.~Cachin, K.~Christidis, A.~De~Caro,
  D.~Enyeart, C.~Ferris, G.~Laventman, Y.~Manevich \emph{et~al.}, ``Hyperledger
  fabric: a distributed operating system for permissioned blockchains,'' in
  \emph{Proceedings of the Thirteenth EuroSys Conference}.\hskip 1em plus 0.5em
  minus 0.4em\relax ACM, 2018, p.~30.

\bibitem{ben2014succinct}
E.~Ben-Sasson, A.~Chiesa, E.~Tromer, and M.~Virza, ``Succinct non-interactive
  zero knowledge for a von neumann architecture.'' in \emph{USENIX Security
  Symposium}, 2014, pp. 781--796.

\bibitem{benet2014ipfs}
J.~Benet, ``Ipfs-content addressed, versioned, p2p file system,'' \emph{arXiv
  preprint arXiv:1407.3561}, 2014.

\bibitem{buterin2014next}
V.~Buterin \emph{et~al.}, ``A next-generation smart contract and decentralized
  application platform,'' \emph{white paper}, 2014.

\bibitem{croman2016scaling}
K.~Croman, C.~Decker, I.~Eyal, A.~E. Gencer, A.~Juels, A.~Kosba, A.~Miller,
  P.~Saxena, E.~Shi, E.~G. Sirer \emph{et~al.}, ``On scaling decentralized
  blockchains,'' in \emph{International Conference on Financial Cryptography
  and Data Security}.\hskip 1em plus 0.5em minus 0.4em\relax Springer, 2016,
  pp. 106--125.

\bibitem{douceur2002sybil}
J.~R. Douceur, ``The sybil attack,'' in \emph{International workshop on
  peer-to-peer systems}.\hskip 1em plus 0.5em minus 0.4em\relax Springer, 2002,
  pp. 251--260.

\bibitem{eberhardtzokrates}
J.~Eberhardt and S.~Tai, ``Zokrates-scalable privacy-preserving off-chain
  computations.''

\bibitem{gilad2017algorand}
Y.~Gilad, R.~Hemo, S.~Micali, G.~Vlachos, and N.~Zeldovich, ``Algorand: Scaling
  byzantine agreements for cryptocurrencies,'' in \emph{Proceedings of the 26th
  Symposium on Operating Systems Principles}.\hskip 1em plus 0.5em minus
  0.4em\relax ACM, 2017, pp. 51--68.

\bibitem{juels2016ring}
A.~Juels, A.~Kosba, and E.~Shi, ``The ring of gyges: Investigating the future
  of criminal smart contracts,'' in \emph{Proceedings of the 2016 ACM SIGSAC
  Conference on Computer and Communications Security}.\hskip 1em plus 0.5em
  minus 0.4em\relax ACM, 2016, pp. 283--295.

\bibitem{kalodner2018arbitrum}
H.~Kalodner, S.~Goldfeder, X.~Chen, S.~M. Weinberg, and E.~W. Felten,
  ``Arbitrum: Scalable, private smart contracts,'' in \emph{Proceedings of the
  27th USENIX Conference on Security Symposium}.\hskip 1em plus 0.5em minus
  0.4em\relax USENIX Association, 2018, pp. 1353--1370.

\bibitem{kiayias2017ouroboros}
A.~Kiayias, A.~Russell, B.~David, and R.~Oliynykov, ``Ouroboros: A provably
  secure proof-of-stake blockchain protocol,'' in \emph{Annual International
  Cryptology Conference}.\hskip 1em plus 0.5em minus 0.4em\relax Springer,
  2017, pp. 357--388.

\bibitem{kokoris2018omniledger}
E.~Kokoris-Kogias, P.~Jovanovic, L.~Gasser, N.~Gailly, E.~Syta, and B.~Ford,
  ``Omniledger: A secure, scale-out, decentralized ledger via sharding,'' in
  \emph{2018 IEEE Symposium on Security and Privacy (SP)}.\hskip 1em plus 0.5em
  minus 0.4em\relax IEEE, 2018, pp. 583--598.

\bibitem{luu2016secure}
L.~Luu, V.~Narayanan, C.~Zheng, K.~Baweja, S.~Gilbert, and P.~Saxena, ``A
  secure sharding protocol for open blockchains,'' in \emph{Proceedings of the
  2016 ACM SIGSAC Conference on Computer and Communications Security}.\hskip
  1em plus 0.5em minus 0.4em\relax ACM, 2016, pp. 17--30.

\bibitem{luu2015demystifying}
L.~Luu, J.~Teutsch, R.~Kulkarni, and P.~Saxena, ``Demystifying incentives in
  the consensus computer,'' in \emph{Proceedings of the 22nd ACM SIGSAC
  Conference on Computer and Communications Security}.\hskip 1em plus 0.5em
  minus 0.4em\relax ACM, 2015, pp. 706--719.

\bibitem{micali1999verifiable}
S.~Micali, M.~Rabin, and S.~Vadhan, ``Verifiable random functions,'' in
  \emph{Foundations of Computer Science, 1999. 40th Annual Symposium on}.\hskip
  1em plus 0.5em minus 0.4em\relax IEEE, 1999, pp. 120--130.

\bibitem{nakamoto2008bitcoin}
S.~Nakamoto, ``Bitcoin: A peer-to-peer electronic cash system,'' 2008.

\bibitem{nash1950equilibrium}
J.~F. Nash \emph{et~al.}, ``Equilibrium points in n-person games,'' 1950.

\bibitem{pass2017fruitchains}
R.~Pass and E.~Shi, ``Fruitchains: A fair blockchain,'' in \emph{Proceedings of
  the ACM Symposium on Principles of Distributed Computing}.\hskip 1em plus
  0.5em minus 0.4em\relax ACM, 2017, pp. 315--324.

\bibitem{pass2017sleepy}
------, ``The sleepy model of consensus,'' in \emph{International Conference on
  the Theory and Application of Cryptology and Information Security}.\hskip 1em
  plus 0.5em minus 0.4em\relax Springer, 2017, pp. 380--409.

\bibitem{radner1982collusive}
R.~Radner, ``Collusive behavior in noncooperative epsilon-equilibria of
  oligopolies with long but finite lives,'' in \emph{Noncooperative Approaches
  to the Theory of Perfect Competition}.\hskip 1em plus 0.5em minus 0.4em\relax
  Elsevier, 1982, pp. 17--35.

\bibitem{sapirshtein2016optimal}
A.~Sapirshtein, Y.~Sompolinsky, and A.~Zohar, ``Optimal selfish mining
  strategies in bitcoin,'' in \emph{International Conference on Financial
  Cryptography and Data Security}.\hskip 1em plus 0.5em minus 0.4em\relax
  Springer, 2016, pp. 515--532.

\bibitem{syta2017scalable}
E.~Syta, P.~Jovanovic, E.~K. Kogias, N.~Gailly, L.~Gasser, I.~Khoffi, M.~J.
  Fischer, and B.~Ford, ``Scalable bias-resistant distributed randomness,'' in
  \emph{Security and Privacy (SP), 2017 IEEE Symposium on}.\hskip 1em plus
  0.5em minus 0.4em\relax Ieee, 2017, pp. 444--460.

\bibitem{szabo1996smart}
N.~Szabo, ``Smart contracts: building blocks for digital markets,''
  \emph{EXTROPY: The Journal of Transhumanist Thought,(16)}, 1996.

\bibitem{teutsch2017scalable}
J.~Teutsch and C.~Reitwie{\ss}ner, ``A scalable verification solution for
  blockchains,'' 2017.

\bibitem{van2014versum}
J.~van~den Hooff, M.~F. Kaashoek, and N.~Zeldovich, ``Versum: Verifiable
  computations over large public logs,'' in \emph{Proceedings of the 2014 ACM
  SIGSAC Conference on Computer and Communications Security}.\hskip 1em plus
  0.5em minus 0.4em\relax ACM, 2014, pp. 1304--1316.

\bibitem{wald1973sequential}
A.~Wald, \emph{Sequential analysis}.\hskip 1em plus 0.5em minus 0.4em\relax
  Courier Corporation, 1973.

\bibitem{wang2014learning}
J.~Wang, Y.~Song, T.~Leung, C.~Rosenberg, J.~Wang, J.~Philbin, B.~Chen, and
  Y.~Wu, ``Learning fine-grained image similarity with deep ranking,'' in
  \emph{Proceedings of the IEEE Conference on Computer Vision and Pattern
  Recognition}, 2014, pp. 1386--1393.

\bibitem{wang1993number}
Y.~H. Wang, ``On the number of successes in independent trials,''
  \emph{Statistica Sinica}, pp. 295--312, 1993.

\bibitem{zamani2018rapidchain}
M.~Zamani, M.~Movahedi, and M.~Raykova, ``Rapidchain: Scaling blockchain via
  full sharding,'' in \emph{Proceedings of the 2018 ACM SIGSAC Conference on
  Computer and Communications Security}.\hskip 1em plus 0.5em minus 0.4em\relax
  ACM, 2018, pp. 931--948.

\bibitem{zhang2016town}
F.~Zhang, E.~Cecchetti, K.~Croman, A.~Juels, and E.~Shi, ``Town crier: An
  authenticated data feed for smart contracts,'' in \emph{Proceedings of the
  2016 aCM sIGSAC conference on computer and communications security}.\hskip
  1em plus 0.5em minus 0.4em\relax ACM, 2016, pp. 270--282.

\end{thebibliography}

\end{document}